\documentclass[12pt]{iopart}
\usepackage{graphicx}

\usepackage{comment}
\usepackage{iopams}

\newcommand{\bea}{\begin{eqnarray}}
\newcommand{\eea}{\end{eqnarray}}
\newcommand{\beq}{\begin{equation}}
\newcommand{\eeq}{\end{equation}}
\newcommand{\nn}{\nonumber}
\newcommand{\eqref}[1]{(\ref{#1})}

\newcommand{\C}[1]{{\mathcal{#1}}}

\newcommand{\BB}[1]{{\mathbb{#1}}}

\newcommand{\half}{{\frac{1}{2}}}

\newcommand{\quarter}{{\frac{1}{4}}}

\newcommand{\threehalves}{{\frac{3}{2}}}

\newcommand{\abs}[1]{\vert #1\vert}
\newcommand{\avg}[1]{\left\langle #1 \right\rangle}

\newcommand{\res}{\mathrm{res}}
\newcommand{\sgn}{\mathrm{sgn}}

\newcommand\Tinv{\frac{1}{T}}

\newtheorem{theorem}{Theorem}
\newtheorem{lemma}[theorem]{Lemma}

\newenvironment{proof}[1][Proof]{\begin{trivlist}
\item[\hskip \labelsep {\bfseries #1}]}{\end{trivlist}}

\newcommand{\qed}{\nobreak \ifvmode \relax \else
      \ifdim\lastskip<1.5em \hskip-\lastskip
      \hskip1.5em plus0em minus0.5em \fi \nobreak
      \vrule height0.75em width0.5em depth0.25em\fi}


\begin{document}

\title{Continuum random combs and scale dependent  spectral dimension}
\author{Max R Atkin, Georgios Giasemidis and John F Wheater}
\address{Rudolf Peierls Centre for Theoretical Physics, 1 Keble Road, Oxford OX1 3NP, UK}
\ead{\mailto{m.atkin1@physics.ox.ac.uk}, \mailto{g.giasemidis1@physics.ox.ac.uk}, \mailto{j.wheater1@physics.ox.ac.uk}}

\begin{abstract} Numerical computations have suggested that in causal dynamical triangulation models of quantum gravity (CDT) the effective dimension of spacetime in the ultra-violet is lower than in the infra-red. In this paper we develop a simple model based on previous work on random combs, which share some of the properties of CDT, in which this effect can be shown to occur analytically. We construct a definition for short and long distance spectral dimensions and show that the random comb models exhibit scale dependent spectral dimension defined in this way. We also observe that a hierarchy of apparent spectral dimensions may be obtained in the cross-over region between UV and IR regimes for suitable choices of the continuum variables. Our main result is valid for a wide class of tooth length distributions thereby extending previous work on random combs by Durhuus et al. \end{abstract}

\pacs{04.60.Nc,04.60.Kz,04.60.Gw}
\submitto{\JPA}

\section{Introduction}
This work was motivated by observations made in some formulations of quantum gravity which we will explain shortly. However, the question as to whether it is possible to define consistently a spectral dimension which depends on the scale at which a system is probed by a random walk is of more general interest. In this paper we will construct a definition for such a spectral dimension and show that there are models which do indeed exhibit scale dependent spectral dimension defined in this way.

The demonstration by t' Hooft and Veltman  that quantised General Relativity is perturbatively non-renormalisable in four dimensions \cite{hooft}  led to the search for non-perturbative formulations of quantum gravity and there are now several approaches to this problem.
It was first advocated by Weinberg \cite{weinberg}  that there might be a non-trivial ultraviolet fixed-point and this has been pursued in continuum calculations by a number of authors \cite{Litim,Lauscher}. There is now quite a lot of evidence for such a fixed point although it is not conclusive; precisely because GR is perturbatively non-renormalisable it is necessary to make some assumptions or ultimately uncontrolled approximations in these calculations. An alternative approach within the fixed-point philosophy is to discretise space-time and to look for a critical point or line where a continuum limit may be taken to recover continuum gravity. 

Early attempts based on the so-called Euclidean quantum gravity model (see for instance \cite{Ambjorn:1997}) did not lead to a  continuum limit in four dimensions but the situation improved with the introduction of the  Causal Dynamical Triangulation model (CDT) by Ambj\o rn and Loll in 1998 \cite{Ambjorn:1998}. This defines the gravitational path integral as a sum over discretised geometries (see \cite{Ambjorn:2006} for a recent review). In contrast with the earlier  Euclidean quantum gravity model,  the CDT  approach takes  account of the Lorentzian nature of the path integral by building in a well defined temporal structure from the start. As an example of the success of the CDT approach numerical simulations have shown that in four-dimensional CDT large scale structure in the form of a four-dimensional de Sitter universe emerges purely from quantum fluctuations \cite{Ambjorn:2004aa,Ambjorn:2008aa}. This is a highly non-trivial result keeping in mind that one is dealing with a background independent formulation. Other approaches to quantum gravity include string theory, loop quantum gravity and causal sets but it is the results on the nature of space-time obtained using the CDT and fixed point calculations that particularly concern us here.

To discuss the nature of a quantum space-time at any distance scale requires a quantitative measure which is universal; that is to say it can be defined in any of the models we are interested in and is insensitive to cut-off scale physics while conveying information about the longer distance structure. The simplest such characterisations are various definitions of dimension of which the most familiar is the Hausdorff dimension.  The Hausdorff dimension $d_h$ is defined provided the volume $V(R)$ of a ball of radius $R$ takes the form 
\beq V(R)\sim R^{d_h}\eeq
if $R$ is large enough. 
An alternative probe of structure comes from the behaviour of unbiased random walks (equivalently diffusion) in whatever ensemble of space-times is being considered. The probability $P(t)$ that the walk returns to its starting point after time $t$ provides one of the simplest probes of the nature of space time in quantum gravity models. 
The spectral dimension $d_s$  is defined if
\beq P(t)\sim t^{-d_s/2}.\label{dsdef}\eeq
For which values of $t$ this should be true is a subtlety with which we will be concerned in this paper. For random walks on discretised graphs where the walker is allowed to hop from one vertex to a neighbour at each time step, behaviour of the form \eqref{dsdef} is expected at $t\to \infty$ i.e. when the walk is much longer than the short distance cut-off scale. On the other hand in the continuum the classic picture is that \eqref{dsdef} describes the behaviour as  $t\to 0$. Of course these are not, at least in principle, incompatible as they can be  related by a rescaling. 

An unexpected result from the numerical simulation of CDTs is that the  spectral dimension apparently varies from four at large scales to two at small scales  \cite{Ambjorn:2005aa}. Very recently similar results for this phenomenon of dimensional reduction have also been observed by other approaches such as the exact renormalisation group \cite{Litim,Lauscher},  Horava-Lifshitz gravity \cite{Horava:2009aa,Horava:2009ab}, and in three-dimensional CDT \cite{benedetti} and some further implications are discussed in \cite{Carlip:2009aa}. Such behaviour is not \emph{ a priori} implausible in quantum gravity because there is a dimensionful parameter, namely Newton's constant or equivalently the Planck length.
Studying the spectral dimension at different distance scales raises questions of definition and in the case of numerical simulations, discretisation problems. In a numerical simulation the largest available distance scale is determined by what will fit in the computer and short distance scales are often not much greater than the ultraviolet cut-off, or discretisation scale. Ideally there should be a hierarchy in which the long distance scale is much greater than the short distance scale which is in turn much greater than the cut-off. In this paper we develop a simple model based on previous work on random combs \cite{Durhuus:2005fq}. These are a family of simple geometrical models which share some of the properties of the CDT model; instead of an ensemble of triangulations we have an ensemble of graphs consisting of an infinite spine with teeth of identically independently distributed length hanging off (we define these graphs precisely in Section 2). It was shown in \cite{Durhuus:2005fq} that the spectral dimension is determined by the probability distribution for the length of the teeth. In this paper we show that it is possible to extend the work of \cite{Durhuus:2005fq} by taking a continuum limit thus ensuring that the cut-off scale is much shorter than all physical distance scales. We find that the spectral dimension is one if we take the physical distance explored by the random walk to zero and there exists a number of continuum limits in which the long distance spectral dimension differs from its short distance counterpart. As a by-product of this work we also extend some of the proofs given in \cite{Durhuus:2005fq} to a wider class of probability distributions.

This paper is organized as follows. In Section 2 we briefly review some known results for combs and their spectral dimension and then explain how in principle these can be extended to show different spectral dimensions at long and short distance scales. In Section 3 we introduce a simple model which we prove does in fact exhibit a spectral dimension that is different in the UV and IR. This model forms the basis of all later generalisations. In Section 4 we generalise the results of Section 3 to combs in which teeth of any length may appear with a probability governed by a power law. In Section 5 we examine the possibility of intermediate scales in which the spectral dimension differs from both its UV and IR values. In Section 6 we analyse the case of a comb in which the tooth lengths are controlled by an arbitrary probability distribution and show that continuum limits exist in which the short distance spectral dimension is one while the long distance spectral dimension can assume values in one-to-one correspondence with the positions of the real poles of the Dirichlet series generating function for the probability distribution. We then show how these techniques can be used to extend the results of \cite{Durhuus:2005fq}. In Section 7 we discuss our results and possible directions for future work. Some technical matters are contained in the appendices.


\section{Combs and Walks}

In this section we review some basic facts about random combs and random walks. As much as possible we use the same notation and conventions as \cite{Durhuus:2005fq} and refer to that paper for proofs omitted here. 

\subsection{Definitions}

We use the definition of a comb given in \cite{Durhuus:2005fq}.  Consider the nonnegative integers regarded as a graph, which we denote $N_\infty$, so that $n$
has the neighbours $n\pm 1$ except for $0$ which only has $1$ as a neighbour.
Furthermore, let $N_\ell$ be the integers $0,1,\ldots ,\ell$ regarded as a graph so that
each integer $n\in N_\ell$ has two neighbours $n\pm 1$ except for $0$ and $\ell$
which only have one neighbour, $1$ and $\ell-1$, respectively. A comb $C$ is an infinite rooted tree-graph with a special subgraph $S$ called the spine which is isomorphic to $N_\infty$ with the root at $0$.  At each vertex of
$S$, except the root $0$,  there is attached one of the graphs $N_\ell$ or
$N_\infty$.  We adopt the convention that these linear graphs which are
glued to the spine are attached at their endpoint $0$.  The linear graphs
attached to the spine are called the teeth of the comb, see figure \ref{fig1}.
We will find it convenient to say that a vertex on the spine with no tooth
has a tooth of length $0$.  We will denote by $T_n$ the tooth attached to
the vertex $n$ on $S$, and by $C_k$ the comb obtained by removing the links 
$(0,1),\ldots ,(k-1,k)$, the teeth $T_1,\ldots ,T_k$ and relabelling the 
remaining vertices on the spine in the obvious way. The number of nearest neighbours of a vertex $v$ will be denoted $\sigma(v)$.

It is convenient to give names to some special combs which occur frequently. We denote by $C=*$  the full comb in which every vertex on the spine is attached to an infinite tooth, and by $C=\infty$  the empty comb in which the spine has no teeth (so an infinite tooth is itself an example of $C=\infty$).

\begin{figure}[t]
  \begin{center}
    \includegraphics[width=10cm]{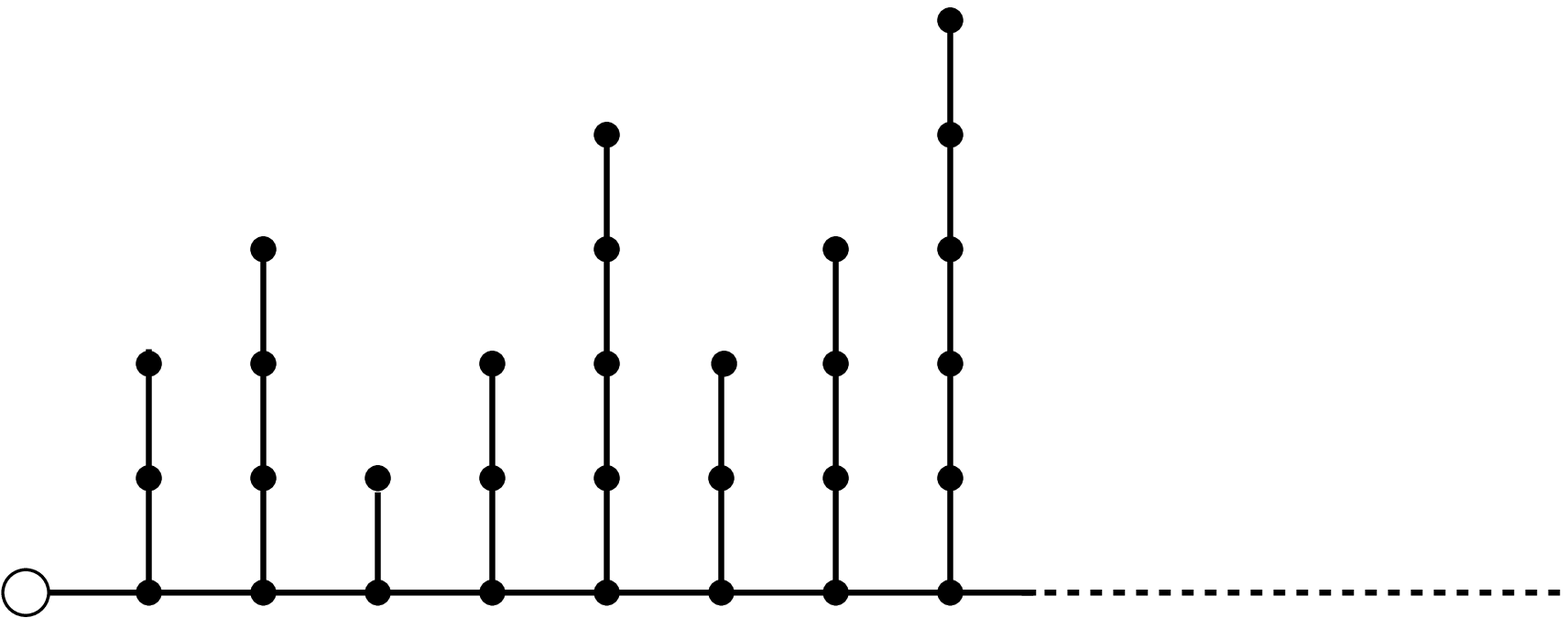}
    \caption{A comb.}
    \label{fig1}
  \end{center}
\end{figure}

Now let $\C{C}$ denote the collection of all combs and define a probability measure $\nu$ on $\C{C}$ by letting the length of the teeth be identically and independently distributed by the measure $\mu$.  We will refer to the set $\C{C}$ equipped with the probability measure $\nu$
as a random comb. Measurable subsets $\C{A}$ of $\C{C}$ are called {\it events} and $\nu (\C{A})$ is the probability of the event $\C{A}$.
The measure of the set of combs $\C{A}$ with teeth at $n_1, n_2,\ldots ,n_k$ having lengths $\ell_1,\ell_2,\ldots ,\ell_k$ is 
\beq
\nu (\C{A})=\prod_{j=1}^k \mu (\ell_j).
\eeq
For any $\nu$-integrable function $F$ defined on $\C{C}$ we define the expectation value
\beq
\langle F(C)\rangle = \int F(C)\,d\nu.
\eeq
We will often use the shorthand $\bar F$ for $\langle F(C)\rangle$.

\subsection{Random Walks}
We consider simple random walk on the comb $C$ and count the time $t$ in integer steps. At each time step the walker moves from its present location  at vertex $v$  to one of the neighbours of $v$ chosen with equal probabilities $\sigma(v)^{-1}$.  Unless otherwise stated  the walker
always starts at the root at time $t=0$.  

The generating function for the probability $p_C(t)$ that the walker is at the        
root at time $t$, having left it  at $t=0$, is defined by
\beq
Q_C(x)=\sum_{t=0}^\infty (1-x)^{t/2}p_C(t)
\eeq
and we denote by $P_C(x)$ the corresponding generating function for the probability that the walker returns to the root for the \emph{first} time, excluding the trivial walk of length 0. Since walks returning to the root   can be decomposed  into walks returning for the 1st, 2nd etc time we have
\beq\label{QPrelation}
Q_C(x) = \frac{1}{1-P_C(x)}.
\eeq
It is convenient to consider contributions to $P_C(x)$ and $Q_C(x)$ from walks which are  restricted. 
Let $P_C^{(n)}(x)$ denote the contribution to $P_C(x)$ from walks whose maximal distance along the spine from the root is $n$ and  define 
\beq P^{(<n)}_C(x)=\sum_{k=0}^{n-1} P_C^{(k)}(x)\eeq
which is the contribution from all walks which do not  reach the point $n$ on the spine. Similarly we define
\beq P^{(>n-1)}_C(x)=\sum_{k=n}^{\infty} P_C^{(k)}(x).\eeq
Clearly $P_C(x)$ can be recovered from $P^{(<n)}_C(x)$ by setting $n\to\infty$. We define the corresponding restricted contributions to $Q_C(x)$ in the same way.
By decomposing walks contributing to  $P^{(<n)}_C(x)$ into a step to $1$, walks returning to $1$ without visiting the root, and finally a step back to the root it is straightforward to show that
\beq\label{Precurrence} P^{(<n)}_C(x)=\frac{1-x}{3-P_{T_1}(x)-P^{(<n-1)}_{C_1}(x)},\eeq
where we have adopted the convention that for the empty tooth, $T=\emptyset$,
\beq P_\emptyset(x)=1.\eeq
The relation \eqref{Precurrence} can be used to compute the generating function explicitly for any comb with a simple periodic structure and we list some standard results in \ref{StandardResults}.

There are a number of elementary lemmas which characterise the dependence of $P_C(x)$ on the length of the teeth and the spacing between them \cite{Durhuus:2005fq}. We state them here in a slightly generalized form which is useful for our subsequent manipulations.
\begin{lemma}
\label{MonoLem1}
The function $P^{(<n)}_C(x)$ is a monotonic increasing function of $P_{T_k}(x)$ and $P^{(<n-k)}_{C_{k}}(x)$ for any $n> k\ge 1$.
\end{lemma}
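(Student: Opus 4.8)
The plan is to prove this by induction on $k$, using the single structural identity \eqref{Precurrence} as the building block. Note that \eqref{Precurrence} expresses $P^{(<n)}_C$ through the elementary map
\beq
f(a,b)=\frac{1-x}{3-a-b},
\eeq
with $a=P_{T_1}(x)$ and $b=P^{(<n-1)}_{C_1}(x)$. For $0<x<1$ the numerator $1-x$ is positive, and as long as the denominator $3-a-b$ stays positive one has $\partial f/\partial a=\partial f/\partial b=(1-x)(3-a-b)^{-2}>0$, so $f$ is strictly increasing in each of its two arguments. This already settles the case $k=1$, since there $P_{T_k}=P_{T_1}$ and $P^{(<n-k)}_{C_k}=P^{(<n-1)}_{C_1}$ enter directly as the arguments of $f$.

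For the inductive step I would strip one further link from the spine. Applying \eqref{Precurrence} to the comb $C_{k-1}$, whose first tooth is $T_k$ and whose once-shortened comb is $C_k$, gives
\beq
P^{(<n-k+1)}_{C_{k-1}}(x)=\frac{1-x}{3-P_{T_k}(x)-P^{(<n-k)}_{C_k}(x)},
\eeq
which by the base case is increasing in both $P_{T_k}$ and $P^{(<n-k)}_{C_k}$. By the induction hypothesis $P^{(<n)}_C$ is a monotonic increasing function of $P^{(<n-k+1)}_{C_{k-1}}$; composing the two statements, and using that a composition of increasing functions is increasing, yields the monotonicity of $P^{(<n)}_C$ in $P_{T_k}$ and $P^{(<n-k)}_{C_k}$. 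Equivalently, one may simply unroll \eqref{Precurrence} exactly $k$ times to display $P^{(<n)}_C$ as a finite continued fraction in the variables $P_{T_1},\ldots,P_{T_k}$ and $P^{(<n-k)}_{C_k}$, in which each of the two quantities of interest enters exactly one denominator; the chain rule then produces a product of manifestly positive factors.

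The only real obstacle is to guarantee that every denominator encountered stays strictly positive, so that the sign of each derivative is controlled and $f$ is genuinely increasing rather than flipping sign across a pole. Here I would invoke the a priori bounds $0\le P_T(x)\le 1$ and $0\le P^{(<m)}_C(x)\le 1$ for $0\le x\le 1$: these generating functions are sums of nonnegative first-return probabilities weighted by $(1-x)^{t/2}\le 1$, the total first-return probability is at most one, and the degenerate value $P_\emptyset(x)=1$ only saturates the bound. Consequently every denominator of the form $3-P_{T_j}-P^{(<\cdot)}_{C_j}$ is at least $3-1-1=1>0$, so the continued fraction stays safely away from its singularities, and the restriction $n>k$ ensures the unrolling terminates at $P^{(<n-k)}_{C_k}$ with $n-k\ge 1$, producing no ill-defined restricted walk. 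With positivity secured the monotonicity statement is immediate, so I expect the write-up itself to be brief.
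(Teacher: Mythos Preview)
Your proposal is correct and is exactly the approach the paper intends: the paper's ``proof'' consists of the single remark that the result follows from the recurrence \eqref{Precurrence}, and your induction on $k$ together with the a~priori bound $0\le P\le 1$ to keep the denominators at least $1$ is the natural way to spell this out. Nothing is missing; if anything your write-up is already more detailed than what the paper (or the reference it defers to) provides.
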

\begin{lemma}
\label{MonoLem2}
$P^{(<n)}_C(x)$ is a decreasing function of the length, $\ell_k$, of the tooth $T_k$ for any $n>k\ge 1$.
\end{lemma}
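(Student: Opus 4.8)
The plan is to reduce the statement to a monotonicity property of a single tooth and then prove that property by an explicit recursion. Changing the length $\ell_k$ of the tooth $T_k$ affects the comb only through the factor $P_{T_k}(x)$: it leaves $C_k$, hence $P^{(<n-k)}_{C_k}(x)$, and all the other teeth untouched. Lemma \ref{MonoLem1} therefore already tells us that $P^{(<n)}_C(x)$ is an increasing function of $P_{T_k}(x)$ for every $n>k\ge 1$. It consequently suffices to show that $P_{T_k}(x)=P_{N_{\ell_k}}(x)$, the first-return generating function of a single linear tooth, is a \emph{decreasing} function of its length $\ell_k$ for $x\in(0,1)$; composing the two monotonicities then yields the claim.

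To control $P_{N_\ell}(x)$ I would exploit the self-similar structure of a tooth. A tooth of length $\ell$ is a finite linear chain whose internal vertices have degree two, and beyond its first vertex it contains a tooth of length $\ell-1$. Running the decomposition behind \eqref{Precurrence} with empty teeth (degree two in place of degree three) should produce the recursion
\beq P_{N_\ell}(x)=\frac{1-x}{2-P_{N_{\ell-1}}(x)},\qquad P_{N_0}(x)=P_\emptyset(x)=1, \eeq
whose consistency can be checked against the standard results of \ref{StandardResults} (for instance $P_{N_1}=1-x$ and the fixed point $P_{N_\infty}=1-\sqrt{x}$).

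Monotonicity in $\ell$ would then follow from a monotone-iteration argument for the map $g(u)=(1-x)/(2-u)$. I would first note that $g$ is increasing on $u<2$, that its relevant fixed point is $u^\ast=1-\sqrt{x}$, and that $g$ maps the interval $[u^\ast,1]$ into itself for $x\in(0,1)$. Since the base case gives $P_{N_1}=1-x<1=P_{N_0}$, applying the increasing map $g$ repeatedly propagates the inequality, $P_{N_{\ell+1}}=g(P_{N_\ell})<g(P_{N_{\ell-1}})=P_{N_\ell}$, so that $P_{N_\ell}(x)$ is strictly decreasing in $\ell$ (and converges to $u^\ast$). Combined with the first paragraph this proves the lemma.

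The reduction via Lemma \ref{MonoLem1} is essentially free, so the real work lies in the recursion and its iteration. I expect the main obstacle to be bookkeeping rather than conceptual: one must verify that the factors $\sigma(v)^{-1}$ and the weights $(1-x)^{1/2}$ assemble into exactly the clean recursion above — in particular that the degree-three convention of \eqref{Precurrence}, together with $P_\emptyset(x)=1$, reduces correctly to degree two for the chain — and one must confirm that every iterate genuinely stays in the region $u<2$ where $g$ is increasing and the generating functions are well defined.
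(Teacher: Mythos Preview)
Your proposal is correct and is essentially the approach the paper has in mind: the paper simply says the proofs ``use \eqref{Precurrence} and follow those given in \cite{Durhuus:2005fq} for the case $n=\infty$,'' and your argument is precisely an explicit unpacking of that --- the reduction via Lemma~\ref{MonoLem1} to monotonicity of $P_{N_\ell}(x)$ in $\ell$, followed by the line-segment specialisation of \eqref{Precurrence} and a monotone-iteration argument for $g(u)=(1-x)/(2-u)$. Your consistency checks ($P_{N_1}=1-x$, fixed point $1-\sqrt{x}$, invariance of $[1-\sqrt{x},1]$) are all correct.
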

\begin{lemma}
\label{RearrangeLem1}
Let $C'$ be the comb obtained from $C$ by swapping the teeth $T_k$ and $T_{k+1}$, $k<n-1$. Then $P^{(<n)}_C(x)>P^{(<n)}_{C'}(x)$ if and only if $ P_{T_k}(x)> P_{T_{k+1}}(x)$.
\end{lemma}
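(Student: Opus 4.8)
The plan is to reduce to the case $k=1$ and then evaluate the relevant difference explicitly. First I would strip off the $k-1$ teeth that the swap leaves untouched. The teeth $T_1,\dots,T_{k-1}$ are identical in $C$ and $C'$, and iterating \eqref{Precurrence} expresses $P^{(<n)}_C$ as a finite continued fraction in the single variable $P^{(<n-k+1)}_{C_{k-1}}$ with coefficients $P_{T_1},\dots,P_{T_{k-1}}$ and $1-x$. Each stage is a map $Z\mapsto \frac{1-x}{3-P_{T_j}-Z}$ whose derivative $\frac{1-x}{(3-P_{T_j}-Z)^2}$ is strictly positive for $0<x<1$ (every denominator being $\ge 3-1-1=1$ by the elementary bounds $P_{T_j}(x)\le 1$, $P^{(<\cdot)}_{\cdot}(x)\le 1$). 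Hence $P^{(<n)}_C=g\big(P^{(<n-k+1)}_{C_{k-1}}\big)$ for a strictly increasing $g$ depending only on $T_1,\dots,T_{k-1}$ and $x$; since $C$ and $C'$ share these teeth, the \emph{same} $g$ governs $P^{(<n)}_{C'}=g\big(P^{(<n-k+1)}_{C'_{k-1}}\big)$. It therefore suffices to prove the claim for $D:=C_{k-1}$, whose first two teeth are $T_k,T_{k+1}$.

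Next I would set $a=P_{T_k}(x)$, $b=P_{T_{k+1}}(x)$, $y=1-x$, $m=n-k+1$, $W=P^{(<m-2)}_{D_2}(x)$ and $c=3-W$. Two applications of \eqref{Precurrence} give
\beq P^{(<m)}_D=\frac{y(c-b)}{(3-a)(c-b)-y},\eeq
and the swapped comb $D'$ (first two teeth exchanged) yields the same expression with $a\leftrightarrow b$, the tail $W$ being common since $D$ and $D'$ agree beyond their first two teeth. A direct computation of the difference then factorises cleanly:
\beq P^{(<m)}_D-P^{(<m)}_{D'}=\frac{y\,(a-b)\,[(c-a)(c-b)-y]}{[(3-a)(c-b)-y]\,[(3-b)(c-a)-y]}.\eeq
Thus the sign of the left-hand side is controlled entirely by the sign of $a-b=P_{T_k}-P_{T_{k+1}}$, \emph{provided} all the other factors are positive.

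The remaining, and really the only delicate, step is to pin down those signs, and this is where the a priori bounds do the work. For $0<x<1$ we have $y>0$; since $a,b,W\le 1$, both $c-a=3-a-W$ and $c-b=3-b-W$ are $\ge 1$, so $(c-a)(c-b)\ge 1>y$ and the bracket $(c-a)(c-b)-y$ is strictly positive. Each denominator is the product of the inner and outer denominators produced by the two uses of \eqref{Precurrence}; for instance $(3-a)(c-b)-y=(c-b)\big(3-a-P^{(<m-1)}_{D_1}\big)$, a product of two quantities each $\ge 1$, and similarly for the other denominator. Hence both denominators are positive and $\sgn\big(P^{(<m)}_D-P^{(<m)}_{D'}\big)=\sgn(a-b)$. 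Lifting back through the strictly increasing map $g$ of the first paragraph then gives $P^{(<n)}_C>P^{(<n)}_{C'}$ if and only if $P_{T_k}(x)>P_{T_{k+1}}(x)$, as required. I expect the only genuine obstacle to be securing the uniform positivity of $(c-a)(c-b)-y$ and of the denominators; once the boundedness of all the generating functions by $1$ is in hand, the rest is bookkeeping.
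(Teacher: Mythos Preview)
Your proof is correct and is precisely the argument the paper points to: the paper states only that the proof uses the recurrence \eqref{Precurrence} and follows the $n=\infty$ case in \cite{Durhuus:2005fq}, and what you have written is exactly that argument---peel off the common teeth $T_1,\dots,T_{k-1}$ via the monotonicity encoded in Lemma~\ref{MonoLem1}, then compute the two-step continued fraction difference explicitly and check the signs using the elementary bound $P\le 1$. Your factorisation of the difference and the positivity check for $(c-a)(c-b)-y$ and for the denominators are accurate.
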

The proofs use \eqref{Precurrence} and follow those given in \cite{Durhuus:2005fq} for the case $n=\infty$.

An important corollary, valid for any comb, of these lemmas is that
\bea \label{Qbounds} x^{-\quarter}\leq Q_C(x)\leq x^{-\half},\eea
which we will refer to as the trivial upper and lower bounds on $Q_C(x)$. The result follows from Lemma \ref{MonoLem2} with $n=\infty$, which gives
\bea
P_*(x) \leq P_C(x)\leq P_\infty(x),
\eea
 and the explicit expressions for  $P_*(x)$ and $P_\infty(x)$ given in \ref{StandardResults}.

\subsection{Two point functions}
Two point correlation functions on the comb correspond to the probability of a walk beginning at the root being at  a particular vertex on the spine at time $t$. In particular, let $p_C(t;n)$ denote the probability that a random walk that starts at the root at time zero is at the vertex $n$ on the spine at time $t$ having not visited the root in the intervening period. We will refer to the generating function for these probabilities as the two point function,  $G_C(x;n)$,  and define it by
\bea
G_C(x;n) = \sum^{\infty}_{t=1} (1-x)^{t/2} p_C(t;n).
\eea
$G_C(x;n)$ may be expressed as 
\bea
G_C(x;n)=\sigma(n)(1-x)^{-n/2}\prod_{k=0}^{n-1}P_{C_k}(x)
\eea
which may be used in conjunction with Lemma  \ref{MonoLem2} 
 to obtain the bounds,
\bea
\frac{G_*(x;n)}{3} \leq \frac{G_C(x;n)}{\sigma(n)}\leq \frac{G_\infty(x;n)}{2}.
\eea
Now let $r_C(t;n)$ denote the probability that a random walk that starts at the root at time zero is at the vertex $n$ on the spine for the first time at time $t$ having not visited the root in the intervening time. We define the modified two point function, $G^{0}_C(x;n)$, by, 
\bea
G^{0}_C(x;n) = \sum^{\infty}_{t=1} (1-x)^{t/2} r_C(t;n)
\eea
and note the following lemmas; 
\begin{lemma}
\label{P2bound}
The contribution $P_C^{(>N)}(x)$ to $P_C(x)$ from  walks whose maximal distance from the root is $N$ or greater satisfies 
\beq
P_C^{(>N-1)}(x) \leq 3x^{-1/2} G_C^0(x;N)^2.    
\eeq
\end{lemma}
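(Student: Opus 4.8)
The plan is to decompose each first-return walk contributing to $P_C^{(>N-1)}(x)$ at the instant it first reaches the spine vertex $N$. A walk contributes to $P_C^{(>N-1)}(x)$ precisely when its maximal spine distance is at least $N$, i.e.\ when it visits $N$ at all, so every such walk splits uniquely into an initial segment realising the first arrival at $N$ without touching the root, followed by a segment running from $N$ back to the root that does not touch the root until its final step. The generating function of the first segment is by definition $G^{0}_C(x;N)$, and the convolution structure of the two independent pieces gives
\beq
P_C^{(>N-1)}(x) = G^{0}_C(x;N)\,\Phi(x),
\eeq
where $\Phi(x)$ is the generating function for walks from $N$ to the root that avoid the root until their last step. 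The whole estimate then reduces to bounding $\Phi(x)$ by $3x^{-1/2}G^{0}_C(x;N)$.

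First I would evaluate $\Phi(x)$ using reversibility of the simple random walk. For any path $\gamma$ the forward and time-reversed step weights obey $w(\gamma)\,\sigma(v_{\mathrm{start}}) = w(\bar\gamma)\,\sigma(v_{\mathrm{end}})$; applied to the walks counted by $\Phi$, whose endpoint is the root and for which $\sigma(0)=1$, reversal is a bijection onto the walks from the root to $N$ that avoid the root in between, which are exactly those counted by $G_C(x;N)$. This yields $\Phi(x)=\sigma(N)^{-1}G_C(x;N)$, hence
\beq
P_C^{(>N-1)}(x)=\frac{1}{\sigma(N)}\,G^{0}_C(x;N)\,G_C(x;N).
\eeq
I would then insert the renewal relation $G_C(x;N)=G^{0}_C(x;N)\,\bigl(1-r_N(x)\bigr)^{-1}$, in which $r_N(x)$ is the generating function for walks that leave $N$ and return to $N$ for the first time without ever touching the root, obtained by separating the first arrival at $N$ from the subsequent root-avoiding excursions back to $N$. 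The claim is then equivalent to the single scalar bound
\beq
\frac{1}{\sigma(N)\bigl(1-r_N(x)\bigr)}\le 3x^{-1/2}.
\eeq

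The main obstacle is exactly this last inequality, because $\bigl(1-r_N(x)\bigr)^{-1}$ is a return generating function at the \emph{interior} vertex $N$, of degree $\sigma(N)\in\{2,3\}$, and so is not covered directly by the trivial bound $Q_C(x)\le x^{-1/2}$ of \eqref{Qbounds}, which was specific to the degree-one root. My approach would be to control $r_N(x)$ by expressing it through \eqref{Precurrence} as a weighted sum over the three arms emanating from $N$ --- the tooth $T_N$, the forward sub-comb $C_N$, and the finite backward segment terminating at the forbidden root --- and to estimate each arm's first-return contribution by comparison with the extremal combs $*$ and $\infty$, using Lemmas \ref{MonoLem1}--\ref{RearrangeLem1} together with the explicit forms of $P_*$ and $P_\infty$ recorded in \ref{StandardResults}. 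The quantity $\bigl(1-r_N(x)\bigr)^{-1}$ is of order $x^{-1/2}$, and the factor $\sigma(N)^{-1}\le\tfrac12$ provides room to spare, so the constant $3$ is comfortable in principle.

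The delicate point, and where I expect the real work to lie, is that finite teeth make returns to $N$ \emph{more} likely than infinite ones, so the extremal comparison does not run in the naively convenient direction; one must check that no tooth-length configuration and no value of $N$ can push $\sigma(N)^{-1}\bigl(1-r_N(x)\bigr)^{-1}$ above $3x^{-1/2}$, uniformly for all $x\in(0,1]$. Establishing that uniform control of the root-avoiding interior-return generating function, and thereby pinning the constant, is the crux of the proof; the preceding decomposition and reversibility steps are essentially bookkeeping once that estimate is in hand.
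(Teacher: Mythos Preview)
Your decomposition at the first visit to $N$ and the reversibility identity $\Phi(x)=\sigma(N)^{-1}G_C(x;N)$ are both correct, and together with the renewal relation $G_C(x;N)=G_C^0(x;N)\bigl(1-r_N(x)\bigr)^{-1}$ they reduce the lemma exactly to the scalar bound you isolate. The paper does not give an in-text proof here; it simply cites Section~2.4 of \cite{Durhuus:2005fq}, where the argument follows essentially this route, so your approach is the standard one.

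The only correction concerns the last paragraph: the step you call ``delicate'' is in fact immediate, and your worry about the direction of the tooth-length comparison is misplaced. Writing $r_N(x)=\sigma(N)^{-1}\sum_i P_i(x)$ over the arms at $N$, the tooth arm and the root-avoiding backward arm are both bounded by the trivial $P_i\le 1$ (they are sub-probability generating functions evaluated at $1-x<1$); no extremal comparison is needed for them. The forward arm alone carries the estimate: $P_{C_N}(x)\le P_\infty(x)=1-\sqrt{x}$ by Lemma~\ref{MonoLem2} (equivalently \eqref{Qbounds}). This gives $1-r_N(x)\ge \sigma(N)^{-1}\sqrt{x}$ in both the $\sigma(N)=2$ and $\sigma(N)=3$ cases, hence
\[
\frac{1}{\sigma(N)\bigl(1-r_N(x)\bigr)}\le x^{-1/2},
\]
so your line of argument actually delivers the inequality with constant $1$ rather than $3$. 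There is no residual ``real work''; the proof is complete once you observe that only the forward sub-comb needs anything sharper than the crude bound $P_i\le 1$.
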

The proof is given in section 2.4 of \cite{Durhuus:2005fq}.  
\begin{lemma}
\label{Mod2ptBounds}
The modified two point function satisfies 
\bea
G^{(0)}_{*}(x;n)\leq G^{(0)}_{C}(x;n) \leq G^{(0)}_{\infty}(x;n).
\eea
\end{lemma}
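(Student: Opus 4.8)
The plan is to reduce the statement to a per-tooth monotonicity that is already implicit in Lemma~\ref{MonoLem2}, by first factorising the modified two point function over the edges of the spine. The key observation is that a walk contributing to $G^{0}_{C}(x;n)$ — a first passage from the root to $n$ that never revisits the root — splits, by the strong Markov property, into a concatenation of first passages across the successive spine edges $0\!\to\!1,\,1\!\to\!2,\ldots,(n-1)\!\to\! n$, which are independent once we condition on the first arrival at each intermediate vertex. Writing $\psi_k(x)$ for the generating function of walks that start at vertex $k$ and reach $k+1$ for the first time without ever visiting the root, I would first establish
\[
G^{0}_{C}(x;n)=\prod_{k=0}^{n-1}\psi_k(x),\qquad \psi_0(x)=(1-x)^{1/2},
\]
the base factor coming from the forced step $0\to1$. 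Because each $\psi_k$ describes a first passage to $k+1$, no contributing walk reaches $n$ prematurely, so the factorisation respects both the ``first time at $n$'' and the ``avoid the root'' conditions.

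Next I would derive a recurrence for $\psi_k$. Decomposing a passage from $k$ to $k+1$ into an arbitrary number of returning excursions at $k$ — either into the tooth $T_k$, contributing $P_{T_k}(x)$ in the normalisation of \eqref{Precurrence}, or a single step towards $k-1$ followed by a first passage back to $k$, which is itself $(1-x)^{1/2}\psi_{k-1}(x)$ — terminated by a final step to $k+1$, and summing the resulting geometric series exactly as in the derivation of \eqref{Precurrence}, I expect
\[
\psi_k(x)=\frac{(1-x)^{1/2}}{3-P_{T_k}(x)-(1-x)^{1/2}\,\psi_{k-1}(x)}\qquad(k\ge1),
\]
where the convention $P_{\emptyset}(x)=1$ again handles vertices carrying no tooth. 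For $x\in(0,1)$ one checks that the denominator stays bounded below by $1$, since $P_{T_k}\le1$ and $(1-x)^{1/2}\psi_{k-1}<1$, so all quantities are positive and no sign subtleties arise.

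With this recurrence in hand the bounds follow by induction on $k$. Both $P_{T_k}(x)$ and $\psi_{k-1}(x)$ enter the (positive) denominator with a negative sign, so $\psi_k$ is a monotonic increasing function of each of them. Moreover the single-tooth return generating function $P_{T_k}(x)$ is itself monotonic decreasing in the tooth length $\ell_k$ — the one-tooth case of Lemma~\ref{MonoLem2} — with extremes given by the infinite-tooth value for the full comb and $P_{\emptyset}(x)=1$ for the empty comb, so that $P^{*}_{T_k}(x)\le P^{C}_{T_k}(x)\le P^{\infty}_{T_k}(x)=1$ for every $k$. Feeding this together with the inductive hypothesis $\psi^{*}_{k-1}\le\psi^{C}_{k-1}\le\psi^{\infty}_{k-1}$ through the recurrence — the base case $\psi_0=(1-x)^{1/2}$ being common to all combs — yields $\psi^{*}_{k}\le\psi^{C}_{k}\le\psi^{\infty}_{k}$, and taking the product over $k=0,\ldots,n-1$ gives the claim.

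I expect the main obstacle to lie in the rigorous justification of the factorisation and the recurrence rather than in the monotonicity step. One must verify that the ``avoid the root'' constraint is propagated correctly through the strong Markov decomposition, so that the excursions at each vertex $k$ are genuinely independent, and one must keep the coordination-number bookkeeping (effective $\sigma=3$ together with the convention $P_{\emptyset}=1$) consistent with \eqref{Precurrence}. Once these are established the comparison is automatic.
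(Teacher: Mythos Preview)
Your strategy is sound and close in spirit to the paper's, but the paper takes a shorter route and your version has a small boundary inconsistency.

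The paper simply writes down the product representation
\[
G^{(0)}_{C}(x;n)=\frac{(1-x)^{-(n-2)/2}}{\sigma(n-1)}\prod_{k=0}^{n-2}P^{(<n-k)}_{C_k}(x)
\]
in terms of the \emph{restricted first-return} generating functions already defined, and invokes Lemma~\ref{MonoLem2} on each factor; the prefactor $1/\sigma(n-1)$ also moves the right way since $\sigma_*(n-1)=3\ge\sigma_C(n-1)\ge2=\sigma_\infty(n-1)$. No new recurrence is needed because the tooth-length monotonicity of $P^{(<m)}$ is exactly the content of Lemma~\ref{MonoLem2}. Your first-passage factors $\psi_k$ give a different (and perfectly legitimate) product, with the pleasant feature that the factors do not depend on the endpoint $n$; the cost is that you must re-derive the monotonicity via your own recurrence rather than citing the existing lemma.

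The inconsistency: the two roles you give $\psi_0$ clash. For the factorisation $G^0_C=\prod_{k\ge0}\psi_k$ you correctly set $\psi_0=(1-x)^{1/2}$ (the forced step $0\to1$). But in your recurrence at $k=1$ the term $(1-x)^{1/2}\psi_{0}$ is meant to encode ``step $1\to0$, then first passage back to $1$'', and that excursion is \emph{forbidden} --- stepping to $0$ visits the root --- so it must contribute $0$, not $1-x$. The clean fix is to take $\psi_1=(1-x)^{1/2}/(3-P_{T_1})$ as the induction base and write the factorisation as $G^0_C=(1-x)^{1/2}\prod_{k=1}^{n-1}\psi_k$; with that repair your induction goes through unchanged.
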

To prove this note that 
\bea G^{(0)}_{C}(x;n)=\frac{(1-x)^{-(n-2)/2}}{\sigma(n-1)}\prod_{k=0}^{n-2}P^{(<n-k)}_{C_k}(x)
\eea 
and use Lemma  \ref{MonoLem2} .

\subsection{Spectral dimension and the continuum limit}\label{dsdiscussion}
The spectral dimension of random combs was studied in \cite{Durhuus:2005fq}. In this work the probability distributions $\mu(\ell)$ for the length $\ell$ of a tooth were chosen to be fixed sets of numbers so the teeth at adjacent sites on the spine are not only independent but generally show large fluctuations relative to each other.   In these circumstances the spectral dimension, $d_s$, describes the large $t$ dependence of the return probability $p_C(t)$
and is given by
\beq d_s=-2 \lim_{t\to\infty} \frac{\log(p_C(t))}{\log t}\label{tdependence}\eeq
if the limit exists.
In fact it is much more convenient to deal with generating functions and by a Tauberian theorem \cite{Flajolet}
we expect that if \eqref{tdependence} holds then, as $x\to 0$,
\beq Q_C(x)\sim x^{-1+d_s/2},\label{QCdsdef}\eeq
where by $f(x)\sim g(x)$ we mean that
\beq c g(x)\le f(x)\le c' g(x),\quad 0<x<x_0,\eeq
where $c$, $c'$ and $x_0$ are positive constants. The property \eqref{QCdsdef} was adopted in \cite{Durhuus:2005fq} as the definition of spectral dimension, assuming it exists. Heuristically the spectral dimension characterizes certain aspects of the long distance structure of a graph as observed by a walker who goes on a very long walk and hence probes that structure. The spectral dimension of an ensemble average is defined in the same way, simply replacing $p_C(t)$ and $Q_C(x)$ by their respective expectation values.

In this paper we study the possibility of different spectral dimension on different distance scales. To do this we have to generalize our definition from \eqref{tdependence} or \eqref{QCdsdef} and introduce at least one characteristic distance scale $L\gg 1$ into the probabilities $\mu(\ell)$ which determine the structure of the comb. We then assign the value $a$ to the distance between adjacent vertices in the graph and take the limit $a\to 0$ and $L\to\infty$ in such a way that the scaled combs have a finite characteristic distance scale; we will refer to this limit as the `continuum' limit and quantities which exist in this limit as continuum quantities. Walks  much longer than $L$ will probe different structure from walks much shorter than $L$ but nonetheless both can be very long in units of the underlying cut-off scale $a$.

In the following sections we will denote  dependence of a function on a number of variables $L_i$,  $i=1,\ldots,N$,  by  $L_i$ passed as one of the function arguments. Given a random comb ensemble specified by $\mu(\ell; L_i)$ and the corresponding $\bar{Q}(x; L_i)$  we define  
\beq\label{Qlimit}
\tilde{Q}(\xi; \lambda_i) = \lim_{a\rightarrow 0} a^{\Delta_\mu} \bar{Q}(a \xi;a^{-\Delta_i}\lambda_i^{\Delta_i}),
\eeq
where the scaling dimensions $\Delta_\mu$ and $\Delta_i$ are chosen to ensure a non-trivial limit and the combinations $\xi\lambda_i$ are dimensionless. $\tilde Q$ can be used to define the spectral dimension at short and long distances.

\newcommand\Qbar{\bar{Q}}
In the following discussion we assume for simplicity that there is just one scale $L$ and that the spectral dimension in the sense of \eqref{tdependence}  exists for $\avg{p_C(t)}$ which implies  that there exists a constant  $t_0$ such that $\avg{p_C(t+1)} <\avg{p_C(t)}, t>t_0$.
Note that 
\bea\label{short} 
&&\fl{\sum_{t=0}^T \avg{p_C(t)} (1-x)^{t/2}= \Qbar(x;L)-\sum_{t=T+1}^\infty \avg{p_C(t)}(1-x)^{t/2}}\nn\\
&&\qquad\qquad \fl{=\Qbar(x;L)-(1-x)^{(T+1)/2}\sum_{t=0}^\infty \avg{p_C(t+T+1)}(1-x)^{t/2}}\nn\\
&&\qquad\qquad \fl{>\Qbar(x;L)-(1-x)^{(T+1)/2}\sum_{t=0}^\infty \avg{p_C(t)}(1-x)^{t/2}.}
\eea
Now choose   
\beq T=\left\lfloor a^{-1}\frac{1}{\xi \log(1+\frac{1}{\xi\lambda})}\right\rfloor-1\eeq
and set  $x=a\xi $ and $L= a^{-\Delta}\lambda^\Delta$ in \eqref{short} to get
\bea \fl{\quad a^{\Delta_\mu} \Qbar(a\xi ;a^{-\Delta}\lambda^\Delta )\left(1-\exp\left(-\xi\lambda\right)\right)< a^{\Delta_\mu}\sum_{t=0}^T \avg{p_C(t)} (1-\xi a)^{t/2}<a^\Delta_\mu\Qbar(a\xi ;a^{-\Delta}\lambda^\Delta ).}\eea
Provided that the limit in \eqref{Qlimit} exists we see that the behaviour of $\tilde Q(\xi;\lambda)$ as $\xi\to\infty$ characterizes the properties of walks of continuum time duration less than 
\beq \lim_{\xi\to\infty}\frac{1}{\xi \log(1+\frac{1}{\xi\lambda})}=\lambda,\eeq
  and we define the  spectral dimension $d^0_s$ at short distances  by
\beq d^0_s= 2\left(1+\lim_{\xi\to\infty} \frac{\log(\tilde Q(\xi;\lambda))}{\log \xi}\right),\label{dsShort}\eeq
provided this limit exists. 

We can define the spectral dimension at long distances in a similar way. First  note that 
by \eqref{Qbounds}
\bea\fl{  \sqrt{T}\ge\sum_{t=0}^\infty \avg{p_C(t)} \left(1-\Tinv\right)^{t/2}>\sum_{t=0}^T \avg{p_C(t)} \left(1-\Tinv\right)^{t/2} >\left(1-\Tinv\right)^T\sum_{t=0}^T \avg{p_C(t)}} \eea
so that 
\bea \Qbar(x;L) -\sqrt{T}(1-\Tinv)^{-T}<\ \sum_{t=T+1}^\infty \avg{p_C(t)}(1-x)^{t/2}< \Qbar(x;L).\eea
This time letting $T=\lfloor a^{-1}\xi^{-1} \log(1+\xi\lambda)\rfloor-1$ we get
\bea\fl{ a^{\Delta_\mu}\Qbar(a\xi;a^{-\Delta}\lambda^\Delta) -e\sqrt{\xi^{-1} \log(1+\xi\lambda)}<  a^{\Delta_\mu}\sum_{t=T+1}^\infty \avg{p_C(t)}(1-\xi a)^{t/2}< a^{\Delta_\mu}\Qbar(a\xi; a^{-\Delta}\lambda^\Delta)}.\nn\\\eea
Provided that the limit in \eqref{Qlimit} exists and that  $\tilde Q(\xi;\lambda)$ diverges as $\xi\to 0$ we see that its behaviour  characterizes the properties of walks of continuum time duration  greater than   $\lim_{\xi\to0} \xi^{-1} \log(1+\xi\lambda)=\lambda$ .  We then define the spectral dimension $d_s^\infty$ at long distances to be
\beq d^\infty_s= 2\left(1+\lim_{\xi\to 0} \frac{\log(\tilde Q(\xi;\lambda))}{\log \xi}\right),\label{dsLong}\eeq
provided this limit exists.

It is by no means obvious that there are graph ensembles for which the limits \eqref{Qlimit} followed by \eqref{dsShort} and \eqref{dsLong}  exist. However in the rest of this paper  we will show for comb ensembles of increasing generality that this is indeed the case. Clearly at the very least any such ensemble must have a characteristic distance scale $\lambda$ that survives the continuum limit otherwise such behaviour is impossible. In all the examples given in this paper it turns out that  the exponent $\Delta_\mu=\half$.  


\section{A simple comb} 
\label{SimpleExample}
We now introduce a random comb whose spectral dimension differs on long and short length scales and thus illustrates that the behaviour described in section \ref{dsdiscussion} can actually occur. 
This comb is defined by the measure,
\bea \label{easycomb}
\mu(\ell;L)&=&\cases{1-\frac{1}{L},&$\ell=0,$\\ \frac{1}{L},& $\ell=\infty,$\\0,&\rm{otherwise.}}
\eea
This random comb  has infinite teeth and they occur with an average separation of $L$. Intuitively we would expect that if a random walker did not move further than a distance of order $L$ from its starting position it would not see the teeth and therefore would measure a spectral dimension of one. If however it were allowed to explore the entire comb it would see something roughly equivalent to a full comb and so feel a much larger spectral dimension. 
To prove this intuition correct we proceed by computing upper and lower bounds for $\bar{Q}$ which are uniform in $L$ and for $0<x<x_0$, where the constant $x_0$ is equal to one unless otherwise stated, and then take the continuum limit to obtain bounds for $\tilde{Q}$ . 


With complete generality we may obtain a lower bound on $\bar{Q}(x)$ by use of Jensen's inequality which takes the form,
\begin{lemma}
\label{GenLBLem}
Let $\bar{P}_T(x;L_i)$ be the mean first return probability generating function of the teeth of the comb defined by $\mu(\ell; L_i)$, then 
\bea
\label{Qeqn}
\bar{Q}(x;L_i) \geq (1+x-\bar{P}_T(x;L_i))^{-\half}.
\eea
\end{lemma}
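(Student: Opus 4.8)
The plan is to combine two applications of Jensen's inequality — one for the map $P_C\mapsto Q_C$, and one for the self-consistent recurrence satisfied by $P_C$ — and then to close the estimate with elementary algebra using only $0\le\bar P_T,\bar P_C\le 1$. Throughout I write $\bar P_C=\avg{P_C(x)}$ and work for $0<x<x_0$.

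First I would exploit the exact relation \eqref{QPrelation}, $Q_C(x)=(1-P_C(x))^{-1}$. The trivial upper bound \eqref{Qbounds} gives $Q_C(x)\le x^{-\half}$, so that $1-P_C(x)\ge x^{\half}>0$ and $P_C(x)<1$. Hence $Q_C$ is the value of the map $y\mapsto(1-y)^{-1}$, which is convex on $(-\infty,1)$, and Jensen's inequality yields at once
\[
\bar Q(x;L_i)=\avg{(1-P_C(x))^{-1}}\ \ge\ (1-\bar P_C(x;L_i))^{-1}.
\]
This reduces the problem to controlling the single number $\bar P_C$.

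Second, I would obtain a closed inequality for $\bar P_C$ by letting $n\to\infty$ in the recurrence \eqref{Precurrence}, which gives the exact identity $P_C=(1-x)(3-P_{T_1}-P_{C_1})^{-1}$ for every comb. The crucial structural input is that the teeth are i.i.d., so the reduced comb $C_1$ has the same distribution as $C$; therefore $\avg{P_{C_1}}=\bar P_C$, while $\avg{P_{T_1}}=\bar P_T$ since $T_1$ is distributed by $\mu$. Because $3-P_{T_1}-P_{C_1}\in[1,3]$ and $v\mapsto(1-x)/v$ is convex for $v>0$, a second application of Jensen gives
\[
\bar P_C\ \ge\ \frac{1-x}{3-\avg{P_{T_1}}-\avg{P_{C_1}}}=\frac{1-x}{3-\bar P_T-\bar P_C}.
\]
Writing $q=\bar P_C$ and $p=\bar P_T$ and clearing the positive denominator, this is the quadratic inequality $q^2-(3-p)q+(1-x)\le 0$.

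Finally I would convert this quadratic bound into the square-root form of the statement. Substituting $q^2\le(3-p)q-(1-x)$ into $(1-q)^2=1-2q+q^2$ gives $(1-q)^2\le x+q(1-p)$; since $q\le 1$ and $1-p\ge 0$ we have $q(1-p)\le 1-p$, whence $(1-q)^2\le 1+x-p$. As $1-\bar P_C\ge 0$ and $1+x-\bar P_T\ge x>0$, this is equivalent to $(1-\bar P_C)^{-1}\ge(1+x-\bar P_T)^{-\half}$, which together with the first step yields the claimed bound. The main obstacle — and the step doing the real work — is the closure in the second step: turning the recurrence into an inequality in $\bar P_C$ alone relies entirely on the stationarity $C_1\stackrel{d}{=}C$ coming from the i.i.d.\ tooth distribution, without which $\avg{P_{C_1}}$ would be an uncontrolled new quantity. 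The rest is routine positivity bookkeeping ($3-\bar P_T-\bar P_C\ge 1$ and $1-P_C\ge x^{\half}$), all of which follows from \eqref{Qbounds} and $0\le\bar P_T,\bar P_C\le 1$.
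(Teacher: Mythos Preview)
Your argument is correct and follows the same route as the reference \cite{Durhuus:2005fq} that the paper cites: Jensen's inequality applied to the convex map $(u,v)\mapsto(1-x)/(3-u-v)$ in the recurrence \eqref{Precurrence}, combined with the stationarity $C_1\stackrel{d}{=}C$ of the i.i.d.\ comb ensemble, and then the elementary quadratic manipulation to reach the square-root form. The only point worth a remark is that your first Jensen step, $\bar Q\ge(1-\bar P_C)^{-1}$, is logically separate from but harmless alongside the second; one can also run the whole argument directly in the variable $D_C=1-P_C$, but the content is identical.
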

The proof is given in \cite{Durhuus:2005fq}. For the  comb \eqref{easycomb} we have
\bea
\bar{P}_T(x;L) = 1 - \frac{1}{L}(1-P_{\infty}(x)) = 1 - \frac{\sqrt{x}}{L} 
\eea
 which implies
\bea
\bar{Q}(x;L) \geq \left(\frac{\sqrt{x}}{L}+x\right)^{-\half}.
\eea
Letting $x = a \xi$ and $L = a^{-\half}\lambda^{\half}$ gives

\bea
\label{SimpleLB}
\tilde{Q}(\xi; \lambda)= \lim_{a\to 0} a^\half \Qbar(a \xi; a^{-\half}\lambda^{\half})\geq
\xi^{-\half}\left(\upsilon^{-\half} + 1\right)^{-\half},
\eea
where we have introduced the dimensionless variable $\upsilon = \xi \lambda$.


To find an upper bound on  $\bar{Q}(x;L)$ we follow \cite{Durhuus:2005fq} and use Lemmas \ref{MonoLem1},  \ref{MonoLem2} and  \ref{RearrangeLem1} to compare a typical comb in the ensemble with the comb consisting of a finite number of infinite teeth at regular intervals.  
First we define the event
\bea
\C A(D,k) = \{C:D_i \leq D:i=0,...,k\}.
\eea
where $D_i$ is the distance between the $i$ and $i+1$ teeth and then
 write,
\bea
\label{Qintegral}
\bar{Q}(x;L) &=& \int_{\C{C}} Q_C(x;L) d\nu \nn\\
 &=& \int_{\C{C}/\C{A}(D,k)} Q_C(x;L) d\nu + \int_{\C{A}(D,k)} Q_C(x;L) d\nu.
\eea
Since the $D_i$ are independently distributed
\beq \nu(\C A(D,k))=(1-(1-1/L)^D)^k.\eeq

Consider a comb $C \in \C A(D,k)$; then by Lemmas \ref{MonoLem1},  \ref{MonoLem2} and  \ref{RearrangeLem1}, 
\beq P_C(x;L) \leq P_{C'}(x),\eeq
where $C'$ is the comb obtained by removing all teeth beyond the $k$ tooth and moving the remaining teeth so that the spacing between each is $D$. Now we can write 
\beq P_{C'}(x) = P^{(<Dk)}_{C'}(x) + P^{(>Dk-1)}_{C'}(x).\eeq Since the walks contributing to $P^{(<Dk)}_{C'}(x)$ do not go beyond the last tooth we have \beq P^{(<Dk)}_{C'}(x) \leq P_{*D}(x),\label{bd1}\eeq where $*D$ denotes the comb consisting of infinite teeth regularly spaced and separated by a distance $D$. Using \eqref{bd1}, Lemmas \ref{P2bound} and \ref{Mod2ptBounds} we have,
\beq\label{PCupper1}
P_C(x;L) \leq P_{{*D}}(x) +  3x^{-\half} G^{(0)}_\infty(x; Dk)^2
\eeq
uniformly in $\C A$. $P_{*D}(x)$ and $G^{(0)}_\infty(x; n)$ are given in Appendix A. 
Now set $D = \lfloor\tilde{D}\rfloor$ and $k=\lceil\tilde{k}\rceil$, where,
\beq 
\tilde{D} = 2 L|\log x L^2|, \qquad \tilde{k}= (xL^2)^{-1/2}.
\eeq
Since $G^{(0)}_\infty(x; n)$ is manifestly a monotonic decreasing function of $n$  and $ P_{*D}(x)  $ an increasing function of $D$,
\beq\label{bd100}
\fl{\bar{Q}(x;L) \leq x^{-1/2} (1-(1-(1-1/L)^{\tilde{D}-1})^{\tilde{k}+1}) + Q_{U}(x)(1-(1-1/L)^{\tilde{D}})^{\tilde{k}} }
\eeq
where we have used \eqref{Qbounds} and
\beq Q_U(x) = \left[1- P_{*\tilde{D}}(x) - 3x^{-\half} G^{(0)}_\infty(x; (\tilde{D}-1)\tilde{k})^2\right]^{-1}.
\eeq
Taking the continuum limit of \eqref{bd100} and using the results of \ref{StandardResults} then gives
\beq\label{SimpleUB} \tilde{Q}(\xi;\lambda) \leq \xi^{-1/2} F(\xi\lambda),\eeq
where
\bea F(v)=\cases{1+o(v^{-1}), &$v\to \infty$,\\v^\quarter\sqrt{\abs{\log v^2}}+o(v^\half), &$v\to 0$.}\eea
It follows from \eqref{dsShort}, \eqref{dsLong}, \eqref{SimpleLB} and  \eqref{SimpleUB}
that
\bea d_s^0=1,\qquad
d_s^\infty=\threehalves.\eea

\section{Combs with Power Law Measures}
We now consider slightly more general combs in which the measure on the teeth is a power law of the form,
\bea \label{powerdist}
\mu(\ell;L)&=&\cases{1-\frac{1}{L},&$\ell=0,$\\ \frac{1}{L}C_{\alpha} \ell ^{-\alpha},& $\ell>0,$}
\eea
where $C_\alpha$ is a normalisation constant and as before $L$ plays the role of a distance scale.  We consider laws in the range $2>\alpha>1$ as it is known that for $\alpha\ge 2$ the comb has spectral dimension $d_s=1$ in the sense of \eqref{QCdsdef} \cite{Durhuus:2005fq} and therefore it is not possible to get a spectral dimension deviating from 1 on any  scale.

To compute a lower bound on the return probability generating function for the above distribution we apply Lemma \ref{GenLBLem} and reduce the problem to computing an upper bound on $1-\bar{P}_T(x)$.
The first return generating function $P_\ell(x)$ for a tooth of length $\ell$  is recorded in \eqref{Pell}; bounding  $\tanh(u)$ above by the function $f(u) = u$ for $u<1$ and $f(u) = 1$ for $u\geq 1$ gives\footnote{In this particular case we could in fact compute $1-\bar{P}_T(x)$ exactly by the Abel summation formula. However the bound we use is good enough to give the desired result with the advantage that the calculation can be done with elementary functions.}
\bea
\label{LB1}
1-\bar{P}_T(x;L_i) &\leq&  \sqrt{x} \left[ m_{\infty} (x)\sum^{[m_\infty^{-1}]}_{\ell=1}  \mu(\ell;L_i) \ell + \sum^{\infty}_{\ell=[m_\infty^{-1}]+1} \mu(\ell;L_i)\right] \nn \\
&\leq& \sqrt{x} -m_\infty(x)\sqrt{x} \int^{\frac{1}{m_\infty(x)}}_0\left( \sum^{[u]}_{\ell=0} \mu(\ell;L_i)\right) du.
\eea
To obtain the second inequality we have applied the Abel summation formula. We therefore have,
\begin{lemma}
\label{GenLBLem2}
For a random comb defined by the measure $\mu$,
\beq
1-\bar{P}_T(x;L_i) \leq \sqrt{x} -m_\infty(x)\sqrt{x} \int^{\frac{1}{m_\infty(x)}}_0 \chi(u;L_i) du
\eeq
where the cumulative probability function $\chi(u;L_i)$ is defined by $\chi(u;L_i)  = \sum^{[u]}_{\ell=0} \mu(\ell;L_i)$.
\end{lemma}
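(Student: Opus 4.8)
The plan is to reduce the statement to an elementary bound on the single–tooth return generating function, followed by a summation–by–parts identity. First I would write the mean explicitly as $1-\bar{P}_T(x;L_i)=\sum_{\ell=0}^\infty \mu(\ell;L_i)\,(1-P_\ell(x))$, using $P_0=P_\emptyset=1$ so that the $\ell=0$ term drops out. Reading the explicit tooth generating function \eqref{Pell} in the form $1-P_\ell(x)=\sqrt{x}\,\tanh(\ell\, m_\infty(x))$, which is consistent with $1-P_\infty(x)=\sqrt{x}$ recovered as $\ell\to\infty$, the task becomes estimating $\sqrt{x}\sum_\ell \mu(\ell;L_i)\,\tanh(\ell\, m_\infty(x))$.

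Next I would apply the single elementary inequality $\tanh u\le f(u)$, where $f(u)=u$ for $u<1$ and $f(u)=1$ for $u\ge 1$; this holds for all $u\ge 0$ because $\tanh$ is concave on $[0,\infty)$ with unit slope at the origin and is bounded above by $1$. Writing $M=[m_\infty(x)^{-1}]$ and splitting the sum at $\ell=M$ then produces the first line of \eqref{LB1}, namely $\sqrt{x}\bigl(m_\infty(x)\sum_{\ell=1}^{M}\mu(\ell;L_i)\,\ell+\sum_{\ell=M+1}^\infty \mu(\ell;L_i)\bigr)$: for $\ell\le M$ one has $\ell m_\infty(x)<1$ so $f=\ell m_\infty(x)$, whereas for $\ell\ge M+1$ one has $f=1$.

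The remaining, and most substantive, step is to recognise this bracket as an integral of the cumulative function $\chi(u;L_i)=\sum_{\ell=0}^{[u]}\mu(\ell;L_i)$. Since $\chi$ is the step function that is constant on each $[\ell,\ell+1)$, I would evaluate $m_\infty(x)\int_0^{1/m_\infty(x)}\chi(u;L_i)\,du$ as a sum over these intervals, the final interval $[M,1/m_\infty(x))$ contributing the boundary piece $\chi(M;L_i)\bigl(1-M m_\infty(x)\bigr)$. A direct summation by parts, using $\chi(\ell;L_i)-\chi(\ell-1;L_i)=\mu(\ell;L_i)$, gives the identity $\sum_{\ell=1}^M \mu(\ell;L_i)\,\ell = M\chi(M;L_i)-\sum_{\ell=0}^{M-1}\chi(\ell;L_i)$; feeding this into the two pieces of the bracket, together with $\sum_{\ell=M+1}^\infty \mu(\ell;L_i)=1-\chi(M;L_i)$, collapses everything to exactly $1-m_\infty(x)\int_0^{1/m_\infty(x)}\chi(u;L_i)\,du$. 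Multiplying through by $\sqrt{x}$ yields the asserted bound.

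The main obstacle is bookkeeping rather than conceptual: one must keep the boundary term at the non–integer endpoint $1/m_\infty(x)$ straight, since $M=[m_\infty(x)^{-1}]$ need not equal $m_\infty(x)^{-1}$. The reassuring feature is that the passage from the discrete bracket to the integral is an \emph{exact} identity, not a further inequality, so no tightness is sacrificed there; the sole inequality used is the $\tanh u\le f(u)$ estimate. I would finally remark that the argument never invokes the specific power–law form \eqref{powerdist}, so the lemma holds for an arbitrary tooth–length measure $\mu$, exactly as stated.
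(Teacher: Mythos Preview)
Your argument is correct and follows exactly the route the paper takes: bound $\tanh u$ by the piecewise function $f$, split at $[m_\infty^{-1}]$, and then apply Abel summation to rewrite the bracket in terms of $\chi$. Your observation that the Abel step is in fact an exact identity (the paper writes it as a second $\le$) is a nice sharpening but does not change the logic.
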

We will see shortly that all behaviour of the spectral dimension of the continuum comb is encoded in the asymptotic expansion of ${\chi}(u;L_i)$ as $u$ goes to infinity \footnote{In general it is not obvious that this asymptotic expansion exists due to the discontinuous nature of $\chi$. We will address this issue later when we consider generic measures.}.
In the present case ${\chi}(u;L)$ is trivially related to the partial sum of the Riemann $\zeta$-function whose leading asymptotic behaviour is well known and we find 
\bea
{\chi}(u;L) &=&  1 - \frac{C_\alpha}{L}\frac{u^{1-\alpha}}{\alpha-1} +\delta(u),
\eea
where 
\bea \abs{\delta(u)}<\frac{c}{L}u^{-\alpha}, \quad u\ge 2.\eea
It follows that for $x<x_0$, where $m_\infty(x_0)=\half$,
\bea 1-\bar{P}_T(x;L) \leq  m_\infty(x)\sqrt{x}\left(\frac{b_1}{L}m_\infty(x)^{\alpha-2}+\frac{b_2}{L}m_\infty(x)^{\alpha-1}+\frac{b_3}{L}\right),\eea
with $b_{1,2,3}$ being  constants depending only on $\alpha$ and $b_1>0$.
Choosing  $L= a^{-\Delta'} \lambda^{\Delta'}$ with $\Delta' = 1-\alpha/2$  yields a lower bound on the continuum return generating function,
\beq
\label{PowerLB}
\tilde{Q}(\xi,\lambda)
\geq \xi ^{-1/2}\left(1+b_1 (\xi\lambda)^{-(1-\alpha/2)}\right)^{-1/2}. \eeq


To obtain a comparable upper bound 
we need 
\begin{lemma}
\label{GenUBLem}
For any random comb and positive integers $H$, $D$ and $k$, the return probability generating function is bounded above by
\bea
\label{GeneralUB}
\bar{Q}(x;L_i) \leq x^{-1/2} (1-(1-(1-p)^{{D}})^{{k}}) + Q_{U}(x)(1-(1-p)^{{D}})^{{k}} ,
\eea
where
\bea p &=& \sum^{\infty}_{\ell = H+1} \mu(\ell;L_i),  \nn\\ 
\label{GeneralQU}
Q_U(x) &=& \left[1- P_{{H},*{D}}(x) - 3x^{-\half} G^{(0)}_\infty(x; {D}{k})^2\right]^{-1},
\eea
and $P_{H, *D}$ is the first return probability generating function for the comb with teeth of length $H+1$ equally spaced at intervals of $D$.
\end{lemma}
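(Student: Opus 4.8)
The plan is to follow the route that produced \eqref{bd100} for the simple comb of Section \ref{SimpleExample}, but now keeping a finite tooth-length cutoff $H$ rather than working with infinite teeth throughout. The probability that a given tooth has length at least $H+1$ is exactly $p=\sum_{\ell=H+1}^\infty\mu(\ell;L_i)$, so the natural event to isolate is the set $\C A$ of combs in which each of the first $k$ consecutive blocks of $D$ spine vertices contains at least one tooth of length $\ge H+1$. Since the tooth lengths are independent, the probability that a single block of $D$ vertices contains no such tooth is $(1-p)^D$, and hence $\nu(\C A)=(1-(1-p)^D)^k$.

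The heart of the argument is a uniform upper bound on $Q_C(x;L_i)$ for $C\in\C A$. I would construct a comparison comb $C'$ from such a $C$ by three operations, each of which can only raise the first-return generating function: (i) shorten every tooth of length $\ge H+1$ down to length exactly $H+1$, which increases $P$ by the monotonicity in tooth length of Lemma \ref{MonoLem2}; (ii) delete (set to length $0$) all remaining teeth, keeping only one long tooth in each of the $k$ blocks, again increasing $P$ by Lemma \ref{MonoLem2}; and (iii) slide the retained tooth in block $j$ out to the root-distal end of its block, i.e.\ to spine vertex $jD$, so that the teeth end up regularly spaced at $D,2D,\ldots,kD$. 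Step (iii) moves a genuine tooth outward past empty teeth, for which $P_\emptyset=1$ is maximal, so each elementary swap raises $P$ by Lemma \ref{RearrangeLem1} (used together with Lemma \ref{MonoLem1} to propagate the inequality through \eqref{Precurrence}). Composing the three steps gives $P_C(x;L_i)\le P_{C'}(x)$, with $C'$ exactly the comb of $k$ teeth of length $H+1$ at spacing $D$.

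Next I would split walks on $C'$ according to whether they reach spine vertex $Dk$, writing $P_{C'}(x)=P^{(<Dk)}_{C'}(x)+P^{(>Dk-1)}_{C'}(x)$. Walks contributing to the first term stay within the first $Dk$ vertices, where $C'$ coincides with the infinite periodic comb, so $P^{(<Dk)}_{C'}(x)\le P_{H,*D}(x)$. For the tail, Lemma \ref{P2bound} with $N=Dk$ gives $P^{(>Dk-1)}_{C'}(x)\le 3x^{-1/2}G^{(0)}_{C'}(x;Dk)^2$, and Lemma \ref{Mod2ptBounds} replaces $G^{(0)}_{C'}$ by $G^{(0)}_\infty$. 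Together with \eqref{QPrelation} these yield $Q_C(x;L_i)\le Q_U(x)$ for all $C\in\C A$, with $Q_U$ as in \eqref{GeneralQU}. On the complement I would use the trivial bound $Q_C(x;L_i)\le x^{-1/2}$ of \eqref{Qbounds}. Finally, splitting $\bar Q(x;L_i)=\int_{\C C\setminus\C A}Q_C\,d\nu+\int_{\C A}Q_C\,d\nu$ and inserting the two pointwise bounds together with $\nu(\C A)=(1-(1-p)^D)^k$ produces exactly \eqref{GeneralUB}.

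I expect the main obstacle to be the rigorous justification of the comparison construction, and within it step (iii) in particular: one must exhibit the passage from an arbitrary admissible $C\in\C A$ to the regular comb $C'$ as a finite sequence of length-decreasing and outward-sliding moves, each covered by one of Lemmas \ref{MonoLem1}--\ref{RearrangeLem1}, and check that no move ever decreases $P$. Some care is also needed because a block may contain several long teeth, so the deletion in step (ii) and the relabelling of retained teeth must be organised consistently before the sliding of step (iii). The remaining estimates are the direct $H$-dependent analogues of those already carried out for \eqref{bd100}.
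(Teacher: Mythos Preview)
Your proposal is correct and matches the paper's proof in overall structure: isolate an event $\C A$ of probability $(1-(1-p)^D)^k$ on which $Q_C$ can be bounded by comparison to the periodic comb $H,*D$ via Lemmas \ref{MonoLem2}, \ref{RearrangeLem1}, \ref{P2bound} and \ref{Mod2ptBounds}, and use the trivial bound \eqref{Qbounds} on the complement. The only difference is cosmetic: the paper defines $\C A$ through the gaps $D_i\le D$ between consecutive long teeth (independent geometric variables, yielding the same product probability), whereas you use the equivalent ``at least one long tooth in each of $k$ disjoint blocks of length $D$''; your block formulation makes both the independence and the outward-sliding step (iii) slightly more transparent, but the two arguments are otherwise identical.
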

The proof is  a slight modification of the upper bound argument used in Section \ref{SimpleExample}.
First define a long tooth to be one whose length is greater than $H$; then the probability that a tooth at a particular vertex is long is
\beq p = \sum^{\infty}_{\ell = H+1} \mu(\ell;L_i).  \eeq
Define the event
\bea
\label{Aevent}
\C A(D,k) = \{C:D_i \leq D:i=0,...,k\}
\eea
where now $D_i$ is the distance between the $i$ and $i+1$ long teeth so that
\bea
\bar{Q}(x;L_i) &=& \int_{\C{C}} Q_C(x;L_i) d\nu \nn\\
&=& \int_{\C{C}/\C{A}(D,k)} Q_C(x;L_i) d\nu + \int_{\C{A}(D,k)} Q_C(x;L_i) d\nu.
\eea
Since the $D_i$ are independently distributed
\beq\label{nuD} \nu(\C A(D,k))=(1-(1-p)^D)^k.\eeq
Now use Lemmas \ref{MonoLem2} and \ref{RearrangeLem1} in turn  to note that for 
\beq P_{C\in \C A(D,k)}(x,L) \leq P_{C'}(x,L)\eeq
where $C'$ is the comb in which all teeth but the first $k$ long teeth have been removed and the remaining long teeth have been arranged so that they have length $H$ and a constant inter-tooth distance  $D$. By the same arguments as we used in Section 3 to get \eqref{PCupper1} we obtain the bound 
\beq
\label{PCpexp}
P_{C\in \C A(D,k)}(x,L) \leq P_{{H}, *{D}} +3 x^{-1/2} G^{(0)}_\infty (x,Dk)^2.
\eeq
Lemma \ref{GenUBLem} then follows from \eqref{Qbounds},  \eqref{nuD} and \eqref{PCpexp}. We now specialise to the power law measure \eqref{powerdist} and set $H=\lfloor\tilde{H}\rfloor$, $D = \lfloor\tilde{D}\rfloor$ and $k=\lceil\tilde{k}\rceil$, where
\bea
\label{LambdaK}
\tilde{H} &=& x^{-1/2} \nn\\
\tilde{D} &=& (\Delta'+1) \frac{\alpha-1}{c_\alpha} x^{\Delta'-1/2} L |\log x L^{1/\Delta'}| \\
\tilde{k} &=& (x L^{1/\Delta'})^{-\Delta'}.\nn 
\eea

Using Lemma \ref{GenUBLem}, the scaling expressions for $P_{H, *D}$ and $G_\infty^{(0)}$ given in \eqref{Pelln}, and taking the continuum limit, gives, after a substantial amount of algebra,
\beq\label{PowerUB1} \tilde{Q}(\xi;\lambda) \leq \xi^{-1/2} F(\xi\lambda),\eeq
where
\bea F(v)=\cases{1+O(v^{-1}), &$v\to \infty$,\\c\, v^{1/2-\alpha/4}  \sqrt{\abs{\log v^2}}+O(v^{\Delta'}), &$v\to 0$.}\eea

The main result of this section is  
\begin{theorem}
The comb with the power law measure \eqref{powerdist} for the tooth length has
 \bea d_s^0=1,\qquad
d_s^\infty=2-\frac{\alpha}{2}.\eea
\end{theorem}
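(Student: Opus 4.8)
The plan is to read off both spectral dimensions directly from the sandwich provided by the lower bound \eqref{PowerLB} and the upper bound \eqref{PowerUB1}, since the definitions \eqref{dsShort} and \eqref{dsLong} depend only on the leading power of $\tilde{Q}(\xi;\lambda)$ in the relevant limit. Writing $v=\xi\lambda$, these bounds read $\xi^{-\half}(1+b_1 v^{-(1-\alpha/2)})^{-\half}\le \tilde{Q}(\xi;\lambda)\le \xi^{-\half}F(v)$, so it suffices to evaluate $\log\tilde{Q}/\log\xi$ in the two limits using the stated asymptotics of the bounding functions. Throughout I use that the assumed range $1<\alpha<2$ implies $1-\alpha/2>0$.

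First I would treat the short-distance limit $\xi\to\infty$, where $v\to\infty$. Then $v^{-(1-\alpha/2)}\to 0$, so the bracket in the lower bound tends to $1$ and the lower bound tends to $\xi^{-\half}$; simultaneously $F(v)\to 1$, so the upper bound also tends to $\xi^{-\half}$. Both bounds thus share the leading power $\xi^{-\half}$, whence $\log\tilde{Q}/\log\xi\to-\half$ and \eqref{dsShort} gives $d_s^0=2(1-\half)=1$.

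Next I would treat the long-distance limit $\xi\to 0$, where $v\to 0$. In the lower bound the dominant term inside the bracket is $b_1 v^{-(1-\alpha/2)}$, so the bound behaves as a constant multiple of $\xi^{-\half}v^{(1-\alpha/2)/2}$, and since $-\half+(1-\alpha/2)/2=-\alpha/4$ this is $\propto\xi^{-\alpha/4}$. In the upper bound the leading small-$v$ term of $F$ is $c\,v^{1/2-\alpha/4}\sqrt{\abs{\log v^2}}$, which dominates the $O(v^{\Delta'})$ remainder because $1/2-\alpha/4<\Delta'=1-\alpha/2$; multiplying by $\xi^{-\half}$ gives a constant multiple of $\xi^{-\alpha/4}\sqrt{\abs{\log v^2}}$. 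Both bounds therefore carry the same power $\xi^{-\alpha/4}$.

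The only delicate point, which I expect to be the main obstacle, is to confirm that the logarithmic factor in the upper bound does not shift the computed dimension. Taking logarithms, the upper bound contributes $-\frac{\alpha}{4}\log\xi+\half\log\abs{\log v^2}+O(1)$; since $\log\abs{\log(\xi\lambda)^2}$ grows only like $\log\abs{\log\xi}$, dividing by $\log\xi$ sends this correction to zero as $\xi\to 0$. Hence both sandwiching bounds yield $\log\tilde{Q}/\log\xi\to-\alpha/4$, the limit in \eqref{dsLong} exists and equals $-\alpha/4$, and we conclude $d_s^\infty=2(1-\alpha/4)=2-\alpha/2$. The matching of leading powers in each limit simultaneously confirms that $\tilde{Q}$ is well defined in the sense required by \eqref{dsShort} and \eqref{dsLong}.
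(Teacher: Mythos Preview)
Your proposal is correct and follows exactly the approach the paper takes: the paper's proof is a one-line assertion that the result ``follows immediately from \eqref{dsShort}, \eqref{dsLong}, \eqref{PowerLB} and \eqref{PowerUB1},'' and you have simply spelled out the details of that immediate deduction. Your explicit check that the logarithmic factor in the upper bound does not affect the limit in \eqref{dsLong} is the only nontrivial point, and you handle it correctly.
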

The result follows immediately from \eqref{dsShort}, \eqref{dsLong}, \eqref{PowerLB} and  \eqref{PowerUB1}.

\section{Multiple Scales}
Given the results for the power law distribution it is natural to investigate the behaviour for a  random comb that has a hierarchy of length scales. The easiest way to achieve such a comb is through a double power law distribution,
\bea \label{2powerdist}
\mu(\ell;L_i)&=&\cases{1-L_1^{-1}-L_2^{-1},&$\ell=0,$\\ \frac{1}{L_1}C_{1} l ^{-\alpha_1}+\frac{1}{L_2} C_{2} l ^{-\alpha_2} ,& $\ell>0.$}
\eea
We may assume without loss of generality that the length scales $L_i$ scale in the continuum limit to lengths $\lambda_i$ such that $\lambda_1< \lambda_{2}$ and that $1<\alpha_i<2$.  

Following the procedure of previous sections a lower bound on $\tilde Q(\xi;\lambda_i)$ is obtained  by using Lemma \ref{GenLBLem}  and noting that $\chi(x;L_i)$ for this comb is essentially the sum of the cumulative probability functions for each power law. This gives
\bea 
1-\bar{P}_T(x;L_i) \leq  m_\infty(x)\sqrt{x}\sum_{i=1}^2\left(\frac{b_{1i}}{L_i}m_\infty(x)^{\alpha_i-2}+\frac{b_{2i}}{L_i}m_\infty(x)^{\alpha_i-1}+\frac{b_{3i}}{L_i}\right).\nn \\\eea
Choosing $L_i$ to scale like $L_i=a^{-\Delta'_i} \lambda_i^{\Delta'_i}$, where $\Delta'_i = 1-\alpha_i/2$, gives a bound on the continuum return generating function of,
\bea
\label{DoubleLB}
{\xi^{-1/2}\left(c_0 +c_1 (\xi \lambda_1)^{-(1-\alpha_1/2)}+c_2( \xi\lambda_2)^{-(1-\alpha_2/2)} \right)^{-1/2} \leq \tilde{Q}(\xi)}.
\eea
An upper-bound on $\tilde{Q}(\xi)$ is obtained by application of Lemma \ref{GenUBLem} in which we set $H=\lfloor\tilde{D}\rfloor$, $D = \lfloor\tilde{D}\rfloor$ and $k=\lceil\tilde{k}\rceil$, where
\bea
\label{DoubleLambdaK}
\tilde{H} &=& x^{-1/2} \nn\\
\tilde{D} &=& \beta x^{-1/2} G(x L_1^{1/\Delta'_1},x L_2^{1/\Delta'_2})^{-1} |\log x L_1^{1/\Delta'_1}| \\
\tilde{k} &=& G(x L_1^{1/\Delta'_1},x L_2^{1/\Delta'_2}) \nn 
\eea
and for convenience we have introduced the function,
\beq
G(\upsilon_1,\upsilon_2) = \frac{C_1}{\alpha_1-1} \upsilon_1^{-\Delta'_1}+\frac{C_2}{\alpha_2-1} \upsilon_2^{-\Delta'_2} .
\eeq
Using  Lemma \ref{GenUBLem} and the scaling expressions in \ref{StandardResults} then gives
\bea
\label{DoubleUB}
&&\tilde{Q}(x;\lambda_i) \leq \xi^{-1/2}\Bigg[1-(1-\upsilon_1^{-s\beta})^G + \nn \\
&&\frac{ (1-\upsilon_1^{-s\beta})^{G-1} }{3\mathrm{cosech}^2(|\log\upsilon_1^\beta|(1-1/G))-\gamma+\sqrt{\gamma^2+1+2\gamma \coth(|\log\upsilon_1^\beta|/G)}} \Bigg]
\eea
where $\upsilon_i=\xi\lambda_i$,  $s=\sgn(\log\upsilon_1)$, $\gamma=\tanh(1)$ and  we have suppressed the arguments of $G(\upsilon_1,\upsilon_2)$ in order to maintain readability. 

We can now examine  \eqref{DoubleLB} and \eqref{DoubleUB} to see what they tells us about the behaviour of $\tilde{Q}(\xi;\lambda_i)$ on various length scales. 
\begin{itemize}
\item When $\xi\gg\lambda_1^{-1}$ both upper and lower bounds of $\tilde Q(\xi;\lambda_i)$ are dominated by the $\xi^{-\half}$ behaviour so taking the $\xi\to\infty$ limit leads to $d_s^0=1$ as in the previous sections.

\item If $\alpha_1 <  \alpha_2$ then when $\xi\ll\lambda_1^{-1}$ both upper and lower bounds of $\tilde Q(\xi;\lambda_i)$ are dominated by the $\xi^{-\alpha_1/4}$ behaviour so taking the $\xi\to 0$ limit leads to $d_s^\infty=2-\alpha_1/2 $. There is no regime in which $\alpha_2$ controls the behaviour.

\item If $\alpha_2 <  \alpha_1$ then when $\xi\ll\Lambda^{-1}$ where
\beq \Lambda^{-1}=\lambda_1^{(2-\alpha_1)/(\alpha_1-\alpha_2)}\lambda_2^{(2-\alpha_2)/(\alpha_2-\alpha_1)}\eeq
both upper and lower bounds of  $\tilde Q(\xi;\lambda_i)$ are dominated by the $\xi^{-\alpha_2/4}$ behaviour so taking the $\xi\to 0$ limit leads to $d_s^\infty=2-\alpha_2/2 $. However there is an intermediate regime $\Lambda^{-1}\ll\xi\ll\lambda_1^{-1}$ where the $\xi^{-\alpha_1/4}$ behaviour dominates and $\tilde Q(\xi;\lambda_i)$ lies in the envelope given by
\beq c_1 \xi^{-\alpha_1/4} <\tilde Q(\xi;\lambda_i)< c_2\xi^{-\alpha_1/4}\sqrt{\abs{\log\xi^\beta}}, \eeq
where the upper and lower bounds will have corrections suppressed by powers of $\xi \lambda_2$ and the upper bound will also have corrections of order $\xi^\beta$. Both $\lambda_2$ and $\beta$ may be chosen to make the corrections arbitrarily small in this scale range. The system therefore appears to have spectral dimension $\delta_S=2-\alpha_1/2$ in this regime. This is a fairly weak statement because $Q(\xi)$ could in principle exhibit a wide variety of behaviours between its upper and lower bounds; this region is just a part of the crossover regime from $d_S^0$ to $d_S^\infty$. However, as we are free to chose $\lambda_2$ to be as large as we like compared to $\lambda_1$, this regime can exist over a scale range of arbitrarily large size. We therefore can force the leading behaviour of $\tilde Q(\xi;\lambda_i)$ in this range to be as close to a power law with exponent $\delta_S=2-\alpha_1/2$ as we like. This is what might be observed, for example, in a numerical simulation; if the difference between the scales $\lambda_1$ and $\lambda_2$ is large then there will be a substantial range of walk lengths in which the data will indicate a spectral dimension of $\delta_S$. We will refer to a spectral dimension that appears in this weaker way as an {\emph{apparent spectral dimension}} and denote it by $\delta_S$ rather than $d_S$.

\end{itemize}

\section{Generic Distributions}
So far we have considered combs in which the distribution of tooth lengths has been governed by power laws or double power laws. In this section we  extend the results of the previous sections to the case were the form of the tooth length distribution is left arbitrary. The most general situation is that the measure on the combs is a continuous function of some parameters $w_i$. The continuum limit of such a comb is obtained in the usual way but with parameters $w_i$ scaling in a non-trivial way; $w_i = w_{c_i} + a^{d_i} \omega_i$. Given a random comb with such a measure we would like to know how many distinct continuum limits exist and for each compute how the spectral dimension depends on the length scale.

The approach we adopt here closely mimics the arguments of the preceding sections, indeed the main complication is technical. As we have seen the properties of the continuum comb are controlled by the asymptotic expansion of $\chi(u)$ as $u$ goes to infinity. The main difference in the generic case is that we may arrange matters so that the scaling dimensions of the coefficients in the asymptotic expansion are such that sub-leading terms appear in the continuum. For the generic case we obviously have no way of knowing the full asymptotic expansion of $\chi(u; w_i)$. However, we will see that for a large class of measures, the form of the asymptotic expansion is encoded in the asymptotic expansion of a particular generating function for $\mu(\ell; w_i)$.

Our first task is to introduce this generating function and relate it to the asymptotic expansion of $\chi(u; w_i)$. To this end we introduce the notion of a smoothed sum \cite{tao},
\bea
\fl{{\chi_\pm}(u;{w_i}) = \sum^\infty_{\ell=0} \mu(\ell;{w_i}) \eta_\pm(\ell/u)=\mu(0;w_i)+ \sum^\infty_{\ell=1} \mu(\ell;{w_i}) \eta_\pm(\ell/u)} \equiv\mu(0) +\chi_\pm^{(1)}(u;{w_i}), \nn \\
\eea
where $\eta_\pm$ is the smooth cut-off function introduced in \ref{AppendixBump} and $u$ controls where the cut-off occurs. Such smoothed sums are related to ${\chi}(u; w_i)$ by,
\bea
\label{chibound}
{\chi_-}(u;{w_i}) \leq \chi(u;{w_i}) \leq {\chi_+}(u;{w_i}). 
\eea
The reason for introducing the smoothed sums is that we may use powerful techniques from complex analysis to compute their asymptotic expansion (see eg  \cite{Flajolet}). The generating function to which the asymptotic expansion of the smoothed sums is related is  the Dirichlet series generating function of $\mu(\ell;{w_i})$,
\beq
\C{D}_\mu(s;{w_i}) = \sum^{\infty}_{\ell=1} \frac{\mu(\ell;{w_i})}{\ell^s}.
\eeq

\begin{figure}[t]
  \begin{center}
    \includegraphics[width=5cm]{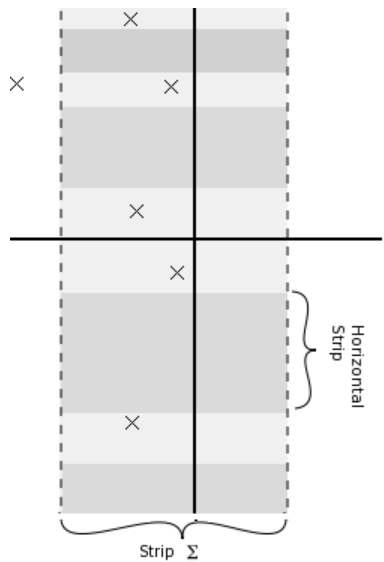}
    \caption{An illustration of the strip $\Sigma$ in which $\C{D}_\mu(s;{w_i})$ satisfies property (1) given given in the text. The poles of $\C{D}_\mu(s;{w_i})$ are denoted by crosses and the horizontal strips in which the growth condition holds are indicated by the dark
grey regions.}
    \label{fig2}
  \end{center}
\end{figure}

We now introduce a number of results and notations,
\begin{itemize}
\item[1.] For a strip in the complex plane $\Sigma(b,a) \equiv \{z: a< \mathrm{Re}[z] < b\}$ where $b>a$, we say $\C{D}_\mu(s;{w_i})$ has {\emph{slow growth}} in $\Sigma(b,a)$ if for all $s \in \Sigma$ we have $\C{D}_\mu(s;{w_i}) \sim O(s^{r})$ for some $r>0$ as $\mathrm{Im}[s] \rightarrow \pm \infty$. We say $\C{D}_\mu(s;{w_i})$ has {\emph {weak slow growth}} if the above property only holds for a countable number of horizontal regions across the strip. See figure \ref{fig2}.
\item[2.] Define $S_\pm$ to be the set containing the triples $(-\sigma+i \tau,-k,-r_\pm)$ such that the Laurent expansion of $\C{D}_\mu(s;{w_i}) \C{M}[\Psi_{\pm\epsilon}](s+1)/s$ about the point $-\sigma+i \tau$ contains the term $r_\pm/(s +\sigma-i \tau)^k$ where $k>0$ and $-\sigma+i \tau \neq0$. Here $\C{M}$ denotes the Mellin transformation and $\Psi_{\pm\epsilon}$ is introduced in \ref{AppendixBump}. We will often find it useful to refer to the elements of $S_\pm$ using an index $j$ and denote the $j$th element of $S_\pm$ by $(-\sigma_j+i \tau_j,-k_j,-r_{j,\pm})$. Note that since $\C{M}[\Psi_{\pm\epsilon}](s+1)$ is analytic in $\Sigma$ then the positions of the poles are determined only by $\C{D}_\mu(s;{w_i})$.

\item[3.] Define the indexing sets $J_R$ and $J_C$, such that if $j \in J_R$ then $\tau_j = 0$ whereas if $j \in J_C$ then $\tau_j \neq 0$ and for both $-\sigma_j+i \tau_j\in \Sigma$ i.e. they index the poles in $\Sigma$ which lie on the real line and off the real line respectively. 
\item[4.] For a Dirichlet series with positive coefficients the abscissa of absolute and conditional convergence coincide. Furthermore, since $\sum^\infty_{l=0} \mu(\ell;{w_i}) = 1$ the abscissa of convergence is less than zero.
\item[5.] Landau's theorem: For a Dirichlet series with positive coefficients there exists a pole at the abscissa of convergence. A corollary of this is that the coefficient of the most singular term in the Laurent expansion about the abscissa of convergence is positive and furthermore it is the right-most pole of the Dirichlet series in the complex plane.
\end{itemize}

If $\C{D}_\mu$ is of slow growth in $\Sigma(a,b)$ and no pole occurs to the right of $\Sigma(a,b)$, then by using the expression \eqref{genasymp} the asymptotic expansion of $\chi_\pm(u;{w_i})$ is, 
\bea
\label{ChiAsymp}
&&{\chi_\pm(u;{w_i}) = \mu(0;{w_i})+\frac{1}{2 \pi i}\oint_C u^{s} \C{D}_\mu(s;{w_i}) \frac{M[\Psi_{\pm\epsilon}](s+1)}{s} ds -R(u;{w_i})} \nn \\
&&= 1 - \sum_{j\in J_R}r_{j,\pm}({w_i})\frac{(\log u)^{k_j-1} }{(k_j-1)!}u^{-\sigma_j}-\\
&&\half \sum_{j\in J_C}\left(r_{j,\pm}({w_i}) u^{i \tau_j}+{r_{j,\pm}}^*({w_i}) u^{-i \tau_j} \right)\frac{(\log u)^{k_j-1}}{(k_j-1)!} u^{-\sigma_j}-R(u;{w_i}), \nn
\eea
where $C$ is the rectangular contour introduced in \ref{AppendixAsymp}\footnote{Since we have assumed there are no poles to the right of $\Sigma$ we may choose $c = b$, where $c$ is the constant appearing in the definition of the contour.}, $r_{j,\pm}$ are the coefficients of the Laurent expansions of $\C{D}_\mu(s;{w_i}) \C{M}[\Psi_{\pm\epsilon}](s+1)/s$, and $R(u)$ is a remainder function which satisfies $R(u) \sim u^{-N}$, where $N>\sigma_j$, as $u$ goes to infinity. Note that the difference between the coefficients $r_{j,+}$ and $r_{j,-}$ is of order $\epsilon$, the parameter introduced in \ref{AppendixBump}, which can be taken to be arbitrarily small and so $r_{j,+}$ and $r_{j,-}$ are for all purposes equal. 

We now are in a position to prove the main result of this section. Note that in the following we require the continuum comb is such that $\Delta_\mu= 1/2$ and so has spectral dimension one on the smallest scales. It seems very likely that this restriction could be lifted
but we will not  pursue such a generalisation here.

\begin{theorem}
\label{mainresult}
For a comb in which the teeth are distributed according to the measure $\mu$ then if $\C{D}_\mu$ has slow growth in the strip $\Sigma = \{z:-1 < Re[z]  \leq 0\}$ and has no poles on its line of convergence besides at the abscissa then continuum limits of the comb exist with $d_s^0=1$ and $d_s^\infty$ taking values in the set $\{\frac{3-\sigma_j}{2}:0<\sigma_j< 1, j\in J_R\}$.\end{theorem}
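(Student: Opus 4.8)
The plan is to reproduce, almost verbatim, the two-sided bounding strategy already used for the power-law measure in \eqref{PowerLB} and \eqref{PowerUB1}, the only new ingredient being that the cumulative distribution $\chi(u;w_i)$ is now controlled through its asymptotic expansion \eqref{ChiAsymp} rather than by an explicit closed form. Throughout I fix a target real pole, i.e.\ an index $j\in J_R$ with $0<\sigma_j<1$, and choose the scaling dimensions $d_i$ of the parameters $w_i=w_{c_i}+a^{d_i}\omega_i$ so that, as $a\to0$, precisely the contribution of this pole survives at criticality while all other terms either vanish or are absorbed into subleading corrections. The exponents then come out so that $\Delta_\mu=\half$ (hence $d_s^0=1$) and $d_s^\infty=\frac{3-\sigma_j}{2}$.

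For the lower bound I would start from Lemma \ref{GenLBLem2} and insert \eqref{ChiAsymp}, using the sandwich \eqref{chibound} to replace $\chi$ by $\chi_\pm$ (which differ only at order $\epsilon$ and so may be identified). Integrating the expansion term by term over $[0,1/m_\infty(x)]$ and using the small-$x$ behaviour $m_\infty(x)\sim\sqrt{x}$, the leading constant in \eqref{ChiAsymp} cancels the explicit $\sqrt{x}$ exactly as in the power-law case, and each real pole produces a contribution $\propto r_j\,\sqrt{x}\,m_\infty(x)^{\sigma_j}|\log x|^{k_j-1}\propto r_j\,x^{(1+\sigma_j)/2}|\log x|^{k_j-1}$. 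Tuning $w_i$ so that the selected coefficient $r_j(w_i)$ scales as $a^{(1-\sigma_j)/2}$ brings this term to the same order as the explicit $x$-term in Lemma \ref{GenLBLem}; feeding the result into $\bar Q\ge(1+x-\bar P_T)^{-\half}$ and taking the limit \eqref{Qlimit} with $x=a\xi$ then reproduces a bound of the power-law type $\tilde Q(\xi;\lambda)\ge \xi^{-\half}\big(1+c\,(\xi\lambda)^{-(1-\sigma_j)/2}+\cdots\big)^{-\half}$, the omitted terms being suppressed either by positive powers of $a$ or by the faster decay of the less-relevant poles ($\sigma>\sigma_j$).

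For the upper bound I would apply Lemma \ref{GenUBLem} with $\tilde H=x^{-1/2}$ and with $D$, $k$ chosen as in \eqref{LambdaK}, but with the long-tooth probability $p=1-\chi(H;w_i)$ now evaluated from \eqref{ChiAsymp}, giving $p\sim r_j\,x^{\sigma_j/2}|\log x|^{k_j-1}$. Substituting the scaling forms of $P_{H,*D}$ and $G^{(0)}_\infty$ from \ref{StandardResults} and taking the continuum limit then yields $\tilde Q(\xi;\lambda)\le \xi^{-\half}F(\xi\lambda)$ with $F(v)\to1$ as $v\to\infty$ and $F(v)\sim c\,v^{(1-\sigma_j)/4}\sqrt{\abs{\log v^2}}$ as $v\to0$, matching the growth of the lower bound up to logarithms. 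Combining the two bounds with \eqref{dsShort} and \eqref{dsLong} gives $d_s^0=1$ from the $\xi\to\infty$ regime, where both bounds behave as $\xi^{-\half}$, and $d_s^\infty=2\big(1-\frac{1}{4}(1+\sigma_j)\big)=\frac{3-\sigma_j}{2}$ from the $\xi\to0$ regime.

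I expect the main obstacle to be the \emph{selection and isolation} of a single real pole. Three points need care. First, the integration in Lemma \ref{GenLBLem2} is of an asymptotic rather than convergent series, so the remainder $R(u)$ in \eqref{ChiAsymp} and the contributions of all poles more relevant than the target (smaller $\sigma$, including the abscissa pole guaranteed positive by Landau's theorem) must be actively driven to zero by the parameter scaling; this is exactly what restricts which $\sigma_j$ can be reached and requires sufficiently many tunable $w_i$. Second, the oscillatory contributions of the complex poles $j\in J_C$ enter at power $u^{-\sigma_j}$ with bounded coefficients $u^{\pm i\tau_j}$, and one must show that under the chosen scaling they remain subdominant to, or at worst fall within the envelope of, the retained real-pole term so that the extracted exponent is unaffected. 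Third, higher-order poles ($k_j>1$) generate the factors $(\log u)^{k_j-1}$, which affect $F$ only through logarithmic corrections and hence drop out of the ratios $\log\tilde Q/\log\xi$ in \eqref{dsShort} and \eqref{dsLong}; verifying this cleanly, together with justifying the interchange of the term-by-term integration (dominated by the large-$u$ region) with the $a\to0$ limit, constitutes the bulk of the technical work.
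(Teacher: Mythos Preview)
Your strategy---lower bound via Lemma \ref{GenLBLem2} fed with the asymptotic expansion \eqref{ChiAsymp}, upper bound via Lemma \ref{GenUBLem} with $\tilde H=x^{-1/2}$ and $D,k$ chosen in terms of the tail probability $p=1-\chi(H)$---is exactly the paper's route, and your identification of the exponents is correct.

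Two points where the paper's execution differs from your sketch are worth noting. First, rather than isolating a single target pole from the outset, the paper keeps \emph{all} real poles whose coefficients are scaled to survive (those in the set $\tilde J_R$ defined by $\theta_j=(1-\sigma_j)/2$ and $\hat\theta_j=-(k_j-1)$), obtaining a bound of the form $\tilde Q\ge\xi^{-1/2}\big(1+\sum_{j\in\tilde J_R}c_j\,(\xi\lambda_j)^{-(1-\sigma_j)/2}\big)^{-1/2}$; the IR exponent is then read off as $(3-\sigma_{\hat j})/2$ where $\sigma_{\hat j}=\min_{j\in\tilde J_R}\sigma_j$. The statement that $d_s^\infty$ can hit any value in $\{(3-\sigma_j)/2\}$ follows by choosing which poles enter $\tilde J_R$. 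Your single-pole isolation is the special case $|\tilde J_R|=1$; the paper's multi-pole version additionally yields the hierarchy of apparent intermediate dimensions discussed after the proof. Second, and more substantively, your proposed treatment of the complex poles (``show they remain subdominant or fall within the envelope'') is not what the paper does and would not in general work: if a complex-pole term is scaled to survive in the continuum, the paper observes that one can always choose a scale window in which that oscillatory term dominates $1-\bar P_T$, forcing it negative infinitely often as $a\to0$---a contradiction with $\bar P_T$ being a probability generating function. The conclusion is not that the complex poles are harmless but that any continuum limit in which they survive is \emph{unphysical}, so they must be scaled out; this is the mechanism that enforces $\tilde J_C=\emptyset$ and is a genuine step you are missing. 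Your third anticipated obstacle (logarithms from $k_j>1$) is handled in the paper not by arguing they drop out of the ratio, but by building the cancelling factor $(-\half\log a)^{-(k_j-1)}$ directly into the scaling ansatz for $r_{j,\pm}$.
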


\begin{proof}
The proof proceeds in much the same way as the proofs in the previous sections; we first derive upper and lower bounds on $\tilde{Q}_C$ and then use these bounds to deduce the behaviour of $\tilde{Q}_C$ in different scale ranges.

We begin with the lower bound which by Jensen's inequality amounts to finding an upper bound on $1-\bar{P}_T$. Note that \eqref{Qeqn} implies that the scaling dimension of $1-\bar{P}_T$ must be greater than or equal to one in order for $\Delta_\mu = 1/2$ and only contributes to the continuum limit if it has scaling dimension one. From Lemma \ref{GenLBLem2} we have,
\bea
\label{LB2}
&&\fl{1-\bar{P}_T(x;{w_i}) \leq  m_\infty(x)\sqrt{x} \int^{\frac{1}{m_\infty(x)}}_0 (1-\chi_-(u;{w_i})) du} \nn \\
&&\qquad\qquad\fl{\leq \sqrt{x}m_\infty(x)C(K;{w_i}) + m_\infty(x)\sqrt{x} \int^{\frac{1}{m_\infty(x)}}_K (1-\chi_-(u;{w_i})) du},
\eea
where $C(K) = \int^K_0 (1-\chi_-(u;{w_i})) du$ is a constant independent of $x$. It is important to recall that since $0\leq \chi_\pm(u;{w_i}) \leq 1$ for all values of $u$ and $w_1,\ldots,w_M$, in particular when $w_1,\ldots,w_M$ assume their critical values, that the scaling dimensions of the coefficients $r_{j,\pm}$ appearing in \eqref{ChiAsymp} must be positive as otherwise $\chi_\pm(u;{w_i})$ would diverge if we were to set $w_1,\ldots,w_M$ to their critical values. Upon performing the integration in \eqref{LB2} we will find that a given $r_{j,\pm}$ now is the coefficient for a number of terms of increasing scaling dimension. Since we require the scaling dimension of $1-\bar{P}_T$ to be greater than or equal to one, only the term with smallest scaling dimension can appear in the continuum limit and we will drop any term that does not appear in the continuum limit.
 If we now substitute \eqref{ChiAsymp} into \eqref{LB2} we find
\bea
\label{PTLB}
&&\fl{1-\bar{P}_T(x;{w_i}) \leq  \sqrt{x}m_\infty(x)C(K;{w_i}) + \sqrt{x}\sum_{j\in J_R}\frac{r_{j,-}({w_i})}{(k_j-1)!}\frac{(-\log m_\infty)^{k_j-1} }{1-\sigma_j}m_\infty^{\sigma_j}+}\nn \\
&&\frac{\sqrt{x}}{2} \sum_{j\in J_C}\left(\frac{r_{j,-}({w_i}) m_\infty^{-i \tau_j}}{1-\sigma_j+i \tau_j}+\frac{{r_{j,-}({w_i})}^*m_\infty^{i \tau_j}}{1-\sigma_j-i\tau_j}  \right)\frac{(-\log m_\infty)^{k_j-1}}{(k_j-1)!}  m_\infty^{\sigma_j},
\eea
where the remainder term $R$ disappears since it cannot appear in the scaling limit. We now suppose we may choose the critical values and scaling dimensions of the parameters $w_i$ such that $r_{j,\pm}$ has a scaling form that can be written as,
\beq
\label{rscaling}
r_{j,\pm} = \left(\frac{a}{\lambda_{j,\pm}}\right)^{\theta_j} \left(-\frac{1}{2}\log a\right)^{{\hat\theta}_j},
\eeq
where $\theta_j$, ${\hat \theta}_j$ and $\lambda_{j,\pm}$ are constants and we have included the factor of $\frac{1}{2}$ for later convenience. From \eqref{PTLB} one can see that since we require $\Delta_\mu = 1/2$ the only terms which appear in the continuum limit are those for which $\theta_j = (1-\sigma_j)/{2}$ and $\hat{\theta}_j = -(k_j-1)$ and so it is useful to define a restricted indexing set
\bea
\tilde{J}_R = \{j \in J_R : \theta_j = (1-\sigma_j)/{2}\quad \mathrm{and} \quad \hat{\theta}_j = -(k_j-1)\} 
\eea
and an equivalent one $\tilde{J}_C$ for $J_C$. The continuum limit is thus,
\bea
&&\fl{1-\bar{P}_T(x;w_i) \leq  a C \xi+\sum_{j\in \tilde{J}_R}\frac{a}{(k_j-1)!(1-\sigma_j)}\xi \left(\xi\lambda_{j,-}\right)^{-\frac{1-\sigma_j}{2}}+} \\
&&\qquad \qquad \fl{\sum_{j\in \tilde{J}_C}\frac{a}{(k_j-1)!}\Bigg(\mathrm{Re}[\phi_j]\cos(\frac{\tau_j}{2} \log (a\xi))+\mathrm{Im}[\phi_j]\sin(\frac{\tau_j}{2} \log (a\xi)) \Bigg)\xi \left(\xi|\lambda_{j,-}|\right)^{-\frac{1-\sigma_j}{2}}}, \nn
\eea
where $\phi_j = (|\lambda_{j,-}|/\lambda_{j,-})^{(1-\sigma_j)/2}(1-\sigma_j+i \tau_j)^{-1}$. The lower bound on $\tilde{Q}_C$ is then obtained from Lemma \ref{GenLBLem}.

Before deriving an upper bound on $\tilde{Q}_C$, we must analyse the consequences of any of the oscillatory terms in the second sum appearing in the continuum limit (i.e. if $\tilde{J}_C$ is non-empty). In the case of the double power law measure we saw that one could obtain intermediate behaviour in which an effective spectral dimension was measured that differed from the UV and IR spectral dimensions. A similar phenomenon occurs in the generic case when the various length scales of the continuum limit are well separated. If we scale the coefficients of the oscillatory terms such that these terms appear in the continuum limit then we are lead to an inconsistency by the following argument, 
\begin{itemize}
\item[1.] Let the term associated with an oscillatory term have index $j_0$. Consider choosing the scaling form of $x$ and $r_{j,-}$ such that $\xi |\lambda_{j_0,-}| \ll 1$ but $\xi |\lambda_{j,-}| \gg 1$ if $|\lambda_{j,-}| \gg |\lambda_{j_0,-}|$.
\item[2.] If we were to take the scaling limit in such a scenario then the term that dominates the size of $1-\bar{P}_T$ is the one associated with the length scale $|\lambda_{j_0,-}|$. This term will be oscillating around a mean of zero and hence must be going negative infinitely often as we approach the continuum.
\item[3.] Since $\bar{P}_T$ is a probability generating function it cannot be negative, hence showing we have an unphysical limit.
\end{itemize}
The cause of this behaviour is that we declared by fiat that the parameters $w_i$ must be scaled 
to give \eqref{rscaling}; however the parameters must also satisfy the constraints $\mu(\ell;{w_i}) > 0$ and $\sum^\infty_{l=0}  \mu(\ell;{w_i}) = 1$ for all values of the parameters and we have not ensured these constraints are compatible with \eqref{rscaling}. It is sufficient for our purposes to understand that the oscillatory terms prevent the continuum limit being taken for certain walk lengths so they are certainly unphysical; we therefore must scale them so that they disappear in the continuum. We therefore have as a lower bound on $Q_C$,
\bea
\tilde{Q}(\xi;\omega_i) \geq \xi^{-1/2}\left(1+\delta_{\Delta{C(K)},0}C+\sum_{j\in \tilde{J}_R}\frac{1}{(k_j-1)!(1-\sigma_j)}\upsilon_{j,-}^{-\frac{1-\sigma_j}{2}}\right)^{-1/2}.\nn\\\label{GeneLB}
\eea

We now consider the upper bound. Without loss of generality we may arrange that the indexing set $J_R$ has the property that if $j_1 < j_2$ then $\lambda_{j_1,\pm} < \lambda_{j_2,\pm}$. Furthermore define
\bea
&&\fl{H(x) = [\tilde{H}(x)] = [x^{-1/2}]}, \nn \\
&&\fl{D(x) = [\tilde{D}(x)] = [\frac{2\beta}{1-\sigma_I} (1-\chi(\tilde{H}(x)))^{-1} |\log( r_{I,+} x^\frac{\sigma_I-1}{2} (-\frac{1}{2}\log x)^{k_I-1})|]}, \\
&&\fl{k(x) = [\tilde{k}(x)] = [x^{-1/2}(1-\chi(\tilde{H}(x)))]}, \nn
\eea
where $I = \inf \tilde{J}_R$, i.e. $\lambda_{I,\pm}$ is the smallest length scale with a spectral dimension differing from one. Some modified versions of the above quantities will also be needed,
\bea
&&\fl{D_\pm(x) = [\tilde{D}_\pm(x)] = [\frac{2\beta}{1-\sigma_I} (1-\chi_\pm(\tilde{H}(x)))^{-1} |\log( r_{I,+} x^\frac{\sigma_I-1}{2} (-\frac{1}{2}\log x)^{k_I-1})|]},\\
&&\fl{k_\pm(x) = [\tilde{k}_\pm(x)] = [x^{-1/2}(1-\chi_\pm(\tilde{H}(x)))]}. \nn
\eea
It is clear that $\tilde{D}_-(x) \leq \tilde{D}(x) \leq \tilde{D}_+(x)$ and $\tilde{k}_+(x) \leq \tilde{k}(x) \leq \tilde{k}_-(x)$ which together with Lemma \ref{GenUBLem} allows us to conclude that the continuum limit is,
\bea
\label{GeneUB}
&&\fl{\tilde{Q}(x;\omega_i) \leq \xi^{-1/2}\Bigg[1-e^{G_- \log(1-\upsilon_I^{-s\beta})} +} \nn \\
&&\qquad\qquad\fl{\frac{e^{(G_+-1) \log(1-\upsilon_I^{-s\beta})}}{3\mathrm{cosech}^2(|\log\upsilon_I^\beta|(1-1/G_+))-\gamma+\sqrt{\gamma^2+1+2\gamma \coth(|\log\upsilon_I^\beta|/G_+)}} \Bigg]}
\eea
where we have defined $G_\pm(\xi,\lambda_1,\ldots) = \sum_{j\in \tilde{J}_R} \upsilon_{j,\pm}^{-\frac{1-\sigma_j}{2}}/(k_j-1)!$ and arranged that no oscillatory terms appear.

We are now in a position to prove Theorem \ref{mainresult} by analysing the behaviour of \eqref{GeneLB} and  \eqref{GeneUB} for various walk lengths. We first note that both the upper and lower bounds are controlled by the relative sizes of the quantities $\upsilon_j^{-1+\sigma_j/2}$. In particular the largest of these quantities, $\Upsilon\equiv\mathrm{sup}\{\upsilon_j^{-1+\sigma_j/2}: j\in\tilde{J}_R\}$, will determine the behaviour in a particular scale range; if we suppose $\Upsilon = \upsilon_j^{-1+\sigma_j/2}$ for some $j$ then the leading contribution to both the upper and lower bound will be proportional to $\xi^{-(1+\sigma_{j})/4}$. Using this we find the following behaviour,
\begin{itemize}
\item On very short scales corresponding to $\xi\gg\lambda_j^{-1}$ for all $j$ then the lower bound of $\tilde Q(\xi)$ is dominated by the $\xi^{-\half}$ behaviour, which together with the trivial upper bound means that taking the $\xi\to\infty$ limit leads to $d_s^0=1$ as in the previous sections.
\item On very long scales corresponding to $\xi\ll\lambda_j^{-1}$ for all $j$ then there will exist a length scale $\Lambda$ such that for $0< \xi < \Lambda^{-1}$, $\Upsilon = \upsilon_{\hat{j}}^{-1+\sigma_{\hat{j}}/2}$ where $\sigma_{\hat{j}} \leq \sigma_{j}$ for all $j$. Explicitly $\Lambda$ will be given by, 
\beq \Lambda=\sup\{\lambda_j^{-(2-\sigma_j)/(\sigma_j-\sigma_{\hat{j}})}\lambda_{\hat{j}}^{(2-\sigma_{\hat{j}})/(\sigma_{j}-\sigma_{\hat{j}})}:j\in \tilde{J}_R\}.\eeq
If we take the limit $\xi \rightarrow 0$ we obtain $d_s^\infty=(3-\sigma_{\hat{j}})/2$.
\end{itemize}

We see that the spectral dimension increases monotonically on successive length scales and the spectral dimension has measured values given by $d_s = (3-\sigma_{\hat{j}(J)})/2$ therefore proving Theorem \ref{mainresult}.
\end{proof}

It is interesting to note that a constraint on the crossover behaviour exists for generic measures much as it did for the case of the double power law measure. In particular on intermediate scales there exists $J \in \tilde{J}_R$ such that $\xi\ll\lambda_j^{-1}$ for $j\leq J$ and $\xi\gg\lambda_j^{-1}$ for $j > J$. Hence  $\upsilon_j^{-1+\sigma_j/2} \ll 1$ if $j>J$ and so $\Upsilon=\mathrm{sup}\{\upsilon_j^{-1+\sigma_j/2}: j\in\tilde{J}_R, j\leq J\}$. This means there will exist a length scale $\Lambda_J$ such that for ${\lambda_{J+1}}^{-1}< \xi < {\Lambda_J}^{-1}$, $\Upsilon = \upsilon_{\hat{j}(J)}^{-1+\sigma_{\hat{j}(J)}/2}$ where we have defined $\hat{j}(J)$ implicitly by $\sigma_{\hat{j}(J)} \leq \sigma_{j}$ for all $j\leq J$. Of course this does not ensure that ${\lambda_{J+1}}^{-1} < {\Lambda_J}^{-1}$, indeed if the length scales $\lambda_j$ are not sufficiently separated this may not be true and we would not have a scale range in which this term dominated. The expression for ${\Lambda_J}$ is,
\beq \Lambda_J=\sup\{\lambda_j^{-(2-\sigma_j)/(\sigma_j-\sigma_{\hat{j}(J)})}\lambda_{\hat{j}(J)}^{(2-\sigma_{\hat{j}(J)})/(\sigma_{j}-\sigma_{\hat{j}(J)})} :j\in \tilde{J}_R, j \leq J\}\eeq
and so we may choose $\lambda_{J+1}$ independent of $\Lambda_J$ thereby allowing the scale range over which this behaviour exists to be arbitrarily large. This would result in an apparent spectral dimension of $\delta_s=(3-\sigma_{\hat{j}(J)})/2$.

Finally, an interesting application of the techniques used to prove Theorem \ref{mainresult} is that they allow one to analyse a wider class of combs than the class for which the results of \cite{Durhuus:2005fq} are valid. In particular it was proven in \cite{Durhuus:2005fq} that for a random comb the spectral dimension is $d_s = (3-\gamma_0)/2$ where $\gamma_0=\sup\{\gamma\geq 0: I_\gamma <\infty \}$ and $I_\gamma =\sum_{\ell =0}^\infty \mu_\ell\, \ell ^{\gamma}$. This was proved subject to the assumption that there exists $d>0$ such that, 
\beq
\label{RCassmp} \sum^{\infty}_{l=[x^{-1/2}]} \mu(\ell) \sim x^d
\eeq
as $x$ goes to zero.

Given the results in this section we see that $-\gamma_0$ may be interpreted as the abscissa of convergence for the Dirichlet series generating function. Furthermore, it is clear from our results that there are distributions where the assumption \eqref{RCassmp} does not hold and that we may use the techniques we have developed to analyse these cases. Recalling that for a random comb we may compute the spectral dimension using the relation \eqref{QCdsdef} then we must perform a similar analysis to that done for the continuum comb but now only scaling $x$ to zero.

Due to Landau's theorem there always exists a pole at the abscissa. Suppose it is of order $k$ and consider $1-\chi_\pm(u)$, 
\bea
&&\fl{1-\chi_\pm(y) = \left[r_{0,\pm} + \sum^\infty_{j=1} \mathrm{Re}[r_{j,\pm}] \cos({\tau_j} \log y) + \mathrm{Im}[r_{j,\pm}] \sin(\tau_j \log y)\right] \frac{(-\log y)^{k-1}}{(k-1)!} y^{\gamma_0}} \nn \\
&&\qquad\qquad\fl{\equiv \Omega(y; \{r_{j,\pm}\}) \frac{(-\log y)^{k-1}}{(k-1)!} y^{\gamma_0}} 
\eea
where $j$ runs over the poles on the line of convergence and $\Omega(y; \{r_{j,\pm}\})$ is implicitly defined in the second line. We have not included any poles of order less than $k$ since these will not be leading order as $y$ goes to zero and we have assumed the there are no poles of order greater than $k$ as then the above quantity would go negative due to the oscillating terms. Applying the same argument as we did for the continuum comb we obtain the lower bound,
\beq
\fl{\bar{Q}(x) \geq \left[ \frac{c'(k-1)!}{\Omega(x^{1/2}; r_{j,-}(1-\gamma_0+i \tau_j)^{-1})}\right]^{1/2}\left(\half|\log x|\right)^{(1-k)/2} x^{-\frac{\gamma_0+1}{4}}.}
\eeq
where $c'$ is a constant. To obtain an upper bound we apply a very similar argument as used for the continuum comb; the only difference being that we choose,
\bea
H &=& [\tilde{H}] = \beta x^{-1/2} |\log x|, \nn \\
D &=& [\tilde{D}] = \beta (1-\chi(\tilde{H}))^{-1} |\log x|, \nn \\
k &=& [\tilde{k}] = x^{-1/2} (1-\chi(\tilde{H})), \\
D_\pm &=& [\tilde{D}_\pm] = \beta (1-\chi_\pm(\tilde{H}))^{-1} |\log x|, \nn \\
k_\pm &=& [\tilde{k}_\pm] = x^{-1/2} (1-\chi_\pm(\tilde{H})) \nn
\eea
and compute the behaviour as $x$ goes to zero of $G^{(0)}_\infty(x,(\tilde{D}-1)(\tilde{k}-1))$, $P_{\tilde{H}, *\tilde{D}_+}$ and the size of the set $\C A$ in \eqref{Aevent}. The result is,
\beq
\fl{\bar{Q}(x) \leq  \left[\frac{c''(k-1)!}{\Omega(\frac{x^{1/2}}{|\log x^\beta|}; r_{j,+})} \right]^{1/2} \left(|\log\left[ \frac{x^{1/2}}{|\log x^\beta|}\right]|\right)^{(1-k)/2} |\log x^\beta|^{\gamma_0/2}x^{-\frac{\gamma_0+1}{4}}.}
\eeq
where $c''$ is a constant. We see that we reproduce the result of \cite{Durhuus:2005fq} when $k=1$ and there are no poles on the line of convergence. If $k\neq 1$, the spectral dimension is the same as the $k=1$ case as only logarithmic corrections are introduced. If we allow poles to appear on the line of convergence then $\Omega(x; r_{j,\pm}) $ will have an oscillating $x$ dependence which, if the bounds above are tight enough, would also imply that the functional form of $\bar{Q}$ is not a power law and so the spectral dimension does not exist. Whether the above bounds are tight enough to make this conclusion we leave for future work. 

\section{Conclusions}
We have demonstrated that there exist models in which a 
scale dependent spectral dimension can be shown to exist analytically. That we could do this was important as up to now all the evidence for scale dependent spectral dimension in CDT has been numerical in nature and therefore open to the criticism that it might be due to discretisation effects or other numerical artefacts. Furthermore, numerical results do not provide any understanding of the mechanism causing the reduction in dimensionality and so we hope the work begun in this paper may be extended to shed some light on this question. This is not an unreasonable expectation as the techniques used to compute the spectral dimensions of random combs \cite{Durhuus:2005fq} have been extended to allow the computation of the spectral dimensions of random trees \cite{Durhuus:2006vk}. Such random trees are closely related to two-dimensional CDT via the bijection described in \cite{Durhuus:2009sm} where it was used to bound the spectral dimension of the spacetime arising in such models.

One might ask how closely related a random comb is to an actual model of quantum gravity. In models of pure quantum gravity in two dimensions the only observable is the spatial volume of the universe. For the case of both CDT and random trees one may show that the evolution of the spatial volume with time is generated by a Hamiltonian. A crucial difference for the random combs is that the growth of a comb is not a Markovian process in the same sense as the growth of a CDT; the vertices at height $h$ from the root are not related to those at height $h-1$ by a local transfer matrix so a Hamiltonian formulation of the evolution of a comb-like universe is not possible. An extension of the work described in this paper, using the techniques of \cite{Durhuus:2006vk}, to the case of random trees would be very interesting as this would constitute an example of  dimensional reduction for a model which does admit a Hamiltonian formulation.

We have given in Theorem \ref{mainresult} a reasonably complete classification of what behaviours one can find on a continuum comb and indeed it is likely that the cases not covered by the theorem do not have a well defined spectral dimension. 
The behaviour for the combs covered by Theorem \ref{mainresult} is fairly rich as there exists a cross-over region between the UV and IR behaviours in which a hierarchy of apparent spectral dimensions exist. 
The proof of the bijection between two-dimensional CDT and trees in \cite{Durhuus:2009sm} shows that any ensemble of critical  Galton-Watson trees is in bijection with a CDT-like theory. It is only when the random tree has the uniform measure that we obtain precisely CDT on the other side of the bijection. This opens the intriguing possibility of constructing a 
variation of CDT, constructed from an ensemble of random trees with a measure that differs from the uniform measure and with a dependence on a length scale. This length scale could then be scaled while taking a continuum limit, as we have done in the work here. Such a model would likely have intermediate length scales with apparent spectral dimensions different from the UV and IR values. We hope to pursue these possibilities in future work.

\ack{
MA would like to acknowledge the support of STFC studentship PfPA/S/S/2006/04507. GG would like to acknowledge the support of the A.S. Onassis Public Benefit Foundation and the A.G. Leventis Foundation grant F-ZG 097/ 2010-2011. This work is supported by EPSRC grant EP/I01263X/1.}


\appendix

\section{Standard generating functions}
\label{StandardResults}
We record here a number of standard results for generating functions for random walks on combs; the details of their calculation are given in \cite{Durhuus:2005fq}.

On the empty comb $C=\infty$ we have
\bea P_\infty(x)&=&1-\sqrt{x},\label{Pinf}\\
P_\infty^{<n}(x)&=&(1-x)\frac{(1+\sqrt{x})^{n-1}-(1-\sqrt{x})^{n-1}}{(1+\sqrt{x})^{n}-(1-\sqrt{x})^{n}},\label{PinfL}\\
G_\infty^{(0)}(x;n)&=&(1-x)^{n/2}\frac{2\sqrt{x}}{(1+\sqrt{x})^{n}-(1-\sqrt{x})^{n}}.  \label{GinfL}
\eea
Note that we can promote $n$ to being a continuous positive semi-definite real variable in these expressions; $G_\infty^{(0)}(x;n)$ is then a strictly decreasing function of $n$ and $P_\infty^{<n}(x)$ a strictly increasing function of $n$.
The finite line segment of length $\ell$ has
\bea P_\ell(x)&=& 1-\sqrt{x}\frac{(1+\sqrt{x})^{\ell}-(1-\sqrt{x})^{\ell}}{(1+\sqrt{x})^{\ell}+(1-\sqrt{x})^{\ell}}
\eea
which it is sometimes convenient to write as 
\bea P_\ell(x)
&=&\sqrt{x}\tanh\left(m_\infty(x)\ell\right)\label{Pell}
\eea
where 
\beq m_\infty(x)=\half\log\frac{1+\sqrt{x}}{1-\sqrt{x}}.\eeq
Again, $\ell$ can be promoted to a continuous positive semi-definite real variable of which $P_\ell(x)$ is a strictly increasing function. The first return probability generating function for the comb with teeth of length $\ell+1$ equally spaced at intervals of $n$ is given by 
\bea
P_{\ell, *n} (x)= \frac{3-P_{\ell}(x)}{2} - \half \left[\left(3-P_{\ell}(x)-2P_\infty^{<n}(x)\right)^2-4G_\infty^{(0)}(x;n)^2\right]^\half.\label{Pelln}
\eea
and $P_{*n}(x)$ is obtained by setting $\ell=\infty$ in this formula. $P_{\ell, *n} (x)$ is a strictly decreasing function of $\ell$ and increasing function of $n$, viewed as continuous positive semi-definite real variables.

We also need the scaling limits of some of these quantities. They are
\bea \lim_{a\to 0} a^{-\half}G_\infty^{(0)}(a\xi;a^{-\half}\nu)&=&\xi^\half \mathrm{cosech}( \nu\xi^\half)\eea
and
\bea \fl{\lim_{a\to 0} a^{-\half}\left(1-P_{(a^{-\half}\rho), *(a^{-\half}\nu)} (a\xi)\right)=-\half\xi^\half\tanh(\rho\xi^\half)}\nn\\ \qquad\qquad\qquad \qquad+\half\xi^\half\left[4+4\tanh\rho\xi^\half\coth\nu\xi^\half+\tanh^2\rho\xi^\half\right]^\half.
\eea

\section{Bump functions}
\label{AppendixBump}
A function $\psi : \BB{R} \rightarrow \BB{R}$ is a bump function if $\psi$ is smooth and has compact support. We will now prove some properties concerning the Mellin transformation of a bump function $\psi$, $\C{M}[\psi](s)$, which has support on $[a,b]$ where $b>a>0$.

\begin{lemma}
\label{bump1}
The critical strip of the Mellin transform of the $n$th derivative of $\psi$, $\psi^{(n)}$, is $\BB{C}$ for all $n$.
\end{lemma}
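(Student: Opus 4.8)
The plan is to show directly that the defining Mellin integral for $\psi^{(n)}$ converges absolutely for every $s\in\BB{C}$, so that the fundamental (critical) strip, being by definition the maximal open vertical strip in which the Mellin integral converges, must be all of $\BB{C}$. The essential structural input is that differentiation does not enlarge support: since $\psi$ is smooth with $\mathrm{supp}(\psi)\subseteq[a,b]$ and $a>0$, each derivative $\psi^{(n)}$ is again smooth with $\mathrm{supp}(\psi^{(n)})\subseteq[a,b]$, so in particular $\psi^{(n)}$ vanishes identically in a neighbourhood of $0$ and of $\infty$. This is precisely what removes the growth constraints at the two ends that would otherwise pin down a finite strip.

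First I would write the Mellin transform as an integral over the compact interval $[a,b]$,
\beq
\C{M}[\psi^{(n)}](s)=\int_0^\infty \psi^{(n)}(x)\,x^{s-1}\,dx=\int_a^b \psi^{(n)}(x)\,x^{s-1}\,dx,
\eeq
using the vanishing of $\psi^{(n)}$ outside $[a,b]$. Fixing an arbitrary $s\in\BB{C}$, on $[a,b]$ one has the bound
\beq
\abs{\psi^{(n)}(x)\,x^{s-1}}\le \Vert\psi^{(n)}\Vert_\infty\,\max\!\left(a^{\R{Re}(s)-1},\,b^{\R{Re}(s)-1}\right),
\eeq
since $\psi^{(n)}$ is continuous on the compact set $[a,b]$, hence bounded, and $\abs{x^{s-1}}=x^{\R{Re}(s)-1}$ is a continuous, hence bounded, function of $x$ on $[a,b]$ (here the hypothesis $a>0$ is what guarantees boundedness near the lower endpoint). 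The right-hand side is a finite constant independent of $x$, so the integrand is bounded on the finite interval $[a,b]$ and the integral converges absolutely.

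Since this argument places no restriction whatsoever on $s$, absolute convergence holds throughout $\BB{C}$, and therefore the fundamental strip is $\BB{C}$ for every $n$. There is no substantive obstacle here: the only point requiring care is to fix the convention that the ``critical strip'' is the region of absolute convergence of the Mellin integral, and to observe that the hypothesis $a>0$ (support bounded away from the origin) together with compactness of the support is exactly what eliminates the would-be singular behaviour at $x\to0^+$ and $x\to\infty$ that normally bounds the strip to a proper substrip of $\BB{C}$.
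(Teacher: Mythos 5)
Your proof is correct and follows essentially the same route as the paper's: restrict the Mellin integral to $[a,b]$ using compact support, bound the integrand via boundedness of $\psi^{(n)}$ on the compact interval together with $a>0$, and conclude absolute convergence for every $s\in\BB{C}$. Your version is in fact marginally more careful than the paper's, writing the bound in terms of $x^{\R{Re}(s)-1}$ rather than the (slightly loose) $x^{s-1}$ appearing in the paper's estimate.
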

\begin{proof}
Recall that the Mellin transform is defined by, $\C{M}[\psi](s) = \int^{\infty}_0 \psi(x) x^{s-1} dx$. Since $\psi$ has compact support we have,
\beq
\C{M}[\psi^{(n)}](s) = \int^{b}_a \psi^{(n)}(x) x^{s-1} dx
\eeq 
and since $\psi$ is smooth, $|\psi^{(n)}|$ is bounded on $[a,b]$ by some constant $K$, so,
\beq
|\C{M}[\psi^{(n)}](s)| \leq K \int^{b}_a x^{s-1} dx
\eeq
and the RHS is finite for all $s \in \BB{C}$ since $b>a>0$. This also shows that $\C{M}[\psi^{(n)}](s)$ is holomorphic for all $s$.
\end{proof}
\begin{lemma}
\label{bump2}
Given $n \in \BB{Z}^+$, $|\C{M}[\psi](\sigma+i D)| \leq \frac{1}{D^n} \C{M}[|\psi^{(n)}|](\sigma+n)$ for all $s \in \BB{C}$.
\end{lemma}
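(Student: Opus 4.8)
The plan is to reduce the estimate to a pure integration-by-parts argument, trading each derivative placed on $\psi$ for a factor of $s$ in the denominator of the Mellin transform. First I would use the compact support of $\psi$, exactly as in Lemma \ref{bump1}, to write $\C{M}[\psi](s) = \int_a^b \psi(x)\, x^{s-1}\,dx$ with $b>a>0$, so that all the integrals below are over the finite interval $[a,b]$.

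Writing $x^{s-1} = \frac{d}{dx}\!\left(x^s/s\right)$, I would integrate by parts once to obtain $\C{M}[\psi](s) = -\frac{1}{s}\int_a^b \psi'(x)\,x^s\,dx = -\frac{1}{s}\,\C{M}[\psi'](s+1)$. Iterating this step $n$ times yields the reduction identity
\[
\C{M}[\psi](s) = \frac{(-1)^n}{s(s+1)\cdots(s+n-1)}\,\C{M}[\psi^{(n)}](s+n).
\]
The final step is then to take moduli at $s=\sigma+iD$. In the numerator, pulling the modulus inside the integral and using $|x^{s+n-1}| = x^{\sigma+n-1}$ for $x>0$ gives $|\C{M}[\psi^{(n)}](\sigma+n+iD)| \leq \int_a^b |\psi^{(n)}(x)|\,x^{\sigma+n-1}\,dx = \C{M}[|\psi^{(n)}|](\sigma+n)$. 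In the denominator each factor satisfies $|\sigma+k+iD| = \sqrt{(\sigma+k)^2+D^2}\geq D$, so the product is bounded below by $D^n$. Combining the two estimates delivers the claimed bound.

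The only point requiring genuine care — and the main, if modest, obstacle — is the vanishing of the boundary terms at each integration by parts. Because $\psi$ is a bump function with support contained in $[a,b]$, it is smooth and vanishes identically outside $[a,b]$; by continuity of all of its derivatives this forces $\psi^{(k)}(a)=\psi^{(k)}(b)=0$ for every $k\geq 0$. Consequently every boundary contribution of the form $\left[\psi^{(k)}(x)\,x^{s+k}/\bigl(s(s+1)\cdots(s+k-1)\bigr)\right]_a^b$ vanishes, which is precisely what makes the clean reduction identity hold; everything else is a routine modulus estimate. I would remark in passing that the same identity also re-proves the holomorphy and entire-strip statement of Lemma \ref{bump1}, since the right-hand side is manifestly analytic away from the nonpositive integers and $\C{M}[\psi^{(n)}]$ is entire.
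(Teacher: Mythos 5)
Your proof is correct and follows essentially the same route as the paper's: integrate by parts $n$ times to obtain $\C{M}[\psi](s) = \frac{(-1)^n}{\prod_{k=0}^{n-1}(s+k)}\,\C{M}[\psi^{(n)}](s+n)$, then bound each factor $|s+k|$ below by $D$ and pull the modulus inside the remaining integral. If anything, your explicit verification that all boundary terms $\psi^{(k)}(a)=\psi^{(k)}(b)=0$ vanish is a point the paper leaves implicit, so no gap remains.
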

\begin{proof}
Recall from the previous lemma that the critical strip of the Mellin transform of $\psi$ and its derivatives coincides with $\BB{C}$. We can therefore use the integral representation of the Mellin transform to prove statements valid for all $s \in \BB{C}$. By integration by parts,
\beq
\C{M}[\psi^{(n)}](s) = -\int^{b}_a \psi^{(n+1)}(x) \frac{x^{s}}{s} dx = -\frac{1}{s}\C{M}[\psi^{(n+1)}](s+1)
\eeq
and therefore,
\beq
\fl{\C{M}[\psi](s) = \frac{(-1)^n}{\prod^{n-1}_{k = 0} (s+k)} \int^{b}_a \psi^{(n)}(x) x^{s+n} dx =  \frac{(-1)^n}{\prod^{n-1}_{k = 0} (s+k)}\C{M}[\psi^{(n)}](s+n).}
\eeq
Hence,
\beq
|\C{M}[\psi](\sigma+iT)| \leq \frac{1}{D^n} \int^{b}_a |\psi^{(n)}(x)| x^{\sigma + n} dx = \frac{1}{D^n} \C{M}[|\psi^{(n)}|](\sigma+n).
\eeq
\end{proof}
Given a bump function $\Psi_{\pm\epsilon}$ which is always positive, has support $[1,1\pm\epsilon]$ and is scaled such that its integral is one, we define the cut-off function to be,
\beq
\eta_\pm(x) = 1 - {\int^x_{-\infty} \Psi_{\pm\epsilon}(x) dx }.
\eeq
\begin{lemma}
The critical strip of $\C{M}[\eta_\pm](s)$ is given be $Re[s]> 0$. The analytic continuation of $\C{M}[\eta_\pm](s)$ to all $s$ is given by
\beq
\C{M}[\eta_\pm](s) = \frac{1}{s}\C{M}[\Psi_{\pm\epsilon}](s+1).
\eeq
\end{lemma}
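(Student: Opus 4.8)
The plan is to treat this as a single integration-by-parts identity, mirroring the manipulation already used in the proof of Lemma \ref{bump2}, and then to read off both the critical strip and the analytic continuation from the resulting formula.

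First I would record the elementary properties of $\eta_\pm$ that follow directly from its definition. Since $\Psi_{\pm\epsilon}$ is positive, supported on $[1,1\pm\epsilon]$ and integrates to one, the cut-off function satisfies $\eta_\pm(x)=1$ for $x\le 1$, $\eta_\pm(x)=0$ for $x\ge 1\pm\epsilon$, and is smooth with $\eta_\pm'(x)=-\Psi_{\pm\epsilon}(x)$. Writing $\C{M}[\eta_\pm](s)=\int_0^\infty \eta_\pm(x)\,x^{s-1}\,dx$, I would note that the integrand vanishes identically for $x>1\pm\epsilon$, so there is no constraint coming from the upper end, while near the origin $\eta_\pm(x)\equiv 1$ and the integrand behaves as $x^{s-1}$. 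Since $\int_0^{1}x^{s-1}\,dx$ converges if and only if $\mathrm{Re}[s]>0$, this establishes that the critical strip is $\mathrm{Re}[s]>0$.

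Next I would carry out the integration by parts within this region of convergence. Taking $v=x^{s}/s$ and $du=\eta_\pm'(x)\,dx=-\Psi_{\pm\epsilon}(x)\,dx$ gives
\beq
\C{M}[\eta_\pm](s)=\left[\eta_\pm(x)\frac{x^{s}}{s}\right]_0^\infty+\frac{1}{s}\int_0^\infty \Psi_{\pm\epsilon}(x)\,x^{s}\,dx.
\eeq
The boundary term vanishes at $x\to\infty$ because $\eta_\pm$ is eventually zero, and at $x\to 0$ because $x^{s}\to 0$ there whenever $\mathrm{Re}[s]>0$; this is precisely the point at which the restriction to the critical strip is needed. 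The surviving integral is $\int_0^\infty \Psi_{\pm\epsilon}(x)\,x^{(s+1)-1}\,dx=\C{M}[\Psi_{\pm\epsilon}](s+1)$, which yields the claimed identity $\C{M}[\eta_\pm](s)=\frac{1}{s}\C{M}[\Psi_{\pm\epsilon}](s+1)$ on $\mathrm{Re}[s]>0$.

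Finally I would invoke Lemma \ref{bump1}: since $\Psi_{\pm\epsilon}$ is a bump function, $\C{M}[\Psi_{\pm\epsilon}]$ is entire, so $\C{M}[\Psi_{\pm\epsilon}](s+1)$ is holomorphic for all $s$ and $\frac{1}{s}\C{M}[\Psi_{\pm\epsilon}](s+1)$ is meromorphic on $\BB{C}$ with at worst a simple pole at $s=0$. As this expression agrees with $\C{M}[\eta_\pm](s)$ throughout the half-plane $\mathrm{Re}[s]>0$, it furnishes the analytic continuation to all of $\BB{C}$, completing the proof. There is no serious obstacle here; the only step demanding care is the vanishing of the boundary term at the origin, which is exactly what pins the critical strip at $\mathrm{Re}[s]>0$.
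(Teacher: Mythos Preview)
Your proposal is correct and follows exactly the route the paper takes: integration by parts on $\int_0^\infty \eta_\pm(x)x^{s-1}\,dx$ together with Lemma~\ref{bump1} to identify the right-hand side as holomorphic away from $s=0$. The paper's proof is only a one-line sketch of this, so you have simply supplied the details the paper omits (the boundary-term check and the explicit determination of the critical strip from the behaviour near the origin).
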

\begin{proof}
The analytic continuation of $\C{M}[\eta_\pm](s)$ may be obtained by applying integration by parts to the Mellin transform of $\eta_\pm$ and recalling by lemma \eqref{bump1} that $\C{M}[\Psi_{\pm\epsilon}](s)$ is holomorphic everywhere.
\end{proof}

\section{Asymptotic Series and Dirichlet Series}
\label{AppendixAsymp}
Starting with 
\beq
{S_\pm}(y) 
= \sum^\infty_{\ell=0} \mu(\ell) \eta_\pm(\ell y)=\mu(0)+ \sum^\infty_{\ell=1} \mu(\ell) \eta_\pm(\ell y)\equiv \mu(0) + S_\pm^{(1)}(y)
\eeq
where $\eta_\pm$ is the smooth cut-off function introduced in \ref{AppendixBump} and $y$ controls where the cut-off occurs we take the Mellin transform of $S_\pm^{(1)}(y)$ with respect to $y$,
\bea
\C M[S^{(1)}_\pm](s) &=& \int^\infty_0 S^{(1)}_\pm(y) y^{s-1} dy \nn \\
&=& \C{D}_\mu(s) \C M[\eta_\pm](s) \nn \\
&=& \C{D}_\mu(s) \C M[\Psi_{\pm\epsilon}](s+1)/s
\eea
where $\C{D}_\mu(s)$ is the Dirichlet series associated to the measure,
\beq
\C{D}_\mu(s) = \sum^{\infty}_{\ell=1} \frac{\mu(\ell)}{\ell^s}.
\eeq
It is easy to see that the fundamental strip of $S^{(1)}_\pm(y)$ is $Re[s]>0$, due to the compact support of $\eta_\pm$ on the positive real axis and so the Mellin transform does indeed exist. We may now invert the Mellin transform to obtain,
\beq
S_\pm(y) = \mu(0)+\frac{1}{2 \pi i}\int^{c+i\infty}_{c-i\infty} y^{-s} \C{D}_\mu(s) \frac{\C M[\Psi_{\pm\epsilon}](s+1)}{s} ds.
\eeq
This may be computed by rewriting the above as,
\beq
S_\pm(y) = \mu(0)+\frac{1}{2 \pi i}\oint_C y^{-s} \C{D}_\mu(s) \frac{\C M[\Psi_{\pm\epsilon}](s+1)}{s} ds -R(y)
\eeq
where the contour $C$ is the rectangle composed of the points $\{c-i\infty,c+i\infty,-N + i\infty,-N-i\infty\}$ with $N>0$, $c$ such the contour is the right of all poles and  
\beq
R(y) = \int^{-N+i\infty}_{-N-i\infty} y^{-s} \C{D}_\mu(s) \frac{\C M[\Psi_{\pm\epsilon}](s+1)}{s} ds.
\eeq
If $\C{D}_\mu$ has slow growth in the strip $-N \leq Re[z]  \leq 0$ then due to lemma \eqref{bump2}, which shows that $\C M[\Psi_{\pm\epsilon}]$ decays faster than any polynomial as $t$ goes to infinity, the contributions from integrating along the contours $c+i\infty$ to $-N + i\infty$ and from $c-i\infty$ to $-N - i\infty$ are zero. Furthermore the remainder term $R(y)$ satisfies,
\beq
|R(y)| \leq \frac{y^{N}}{N} \int^{-N+i\infty}_{-N-i\infty} |\C{D}_\mu(s) ||{\C M[\Psi_{\pm\epsilon}](s+1)}| ds
\eeq
and so will only contribute terms of order $y^N$ to $S_\pm(y)$. We therefore have,
\bea
&&S_\pm(y) = \mu(0)+\frac{1}{2 \pi i}\oint_C y^{-s} \C{D}_\mu(s) \frac{\C M[\Psi_{\pm\epsilon}](s+1)}{s} ds\\
&&= 1+\sum_{s_i \in S \cap \Sigma} \res\left[\C{D}_\mu(s) \frac{\C M[\Psi_{\pm\epsilon}](s+1)}{s}y^{-s};s=s_i\right]-R(y)
\eea
where $S$ is the set of positions of the poles of $\C{D}_\mu$ and we have used the fact that $\C{D}_\mu(0) = 1-\mu(0)$. Finally, define ${\chi_\pm}(u) = \sum^\infty_{l=0} \mu(\ell) \eta_\pm(l/u)$. By relating this function to $S_\pm(y)$ we may write,
\bea
\label{genasymp}
&&\chi_\pm(u) = \mu(0)+\frac{1}{2 \pi i}\oint_C u^{s} \C{D}_\mu(s) \frac{\C M[\Psi_{\pm\epsilon}](s+1)}{s} ds\\
&&= 1+\sum_{s_i \in S \cap \Sigma} \res\left[\C{D}_\mu(s) \frac{\C M[\Psi_{\pm\epsilon}](s+1)}{s}u^{s};s=s_i\right]-R(y).
\eea

\section*{References}

\end{document}